\documentclass[journal]{IEEEtran}
\usepackage{cite}
\usepackage{amsmath}
\usepackage{amssymb}
\usepackage{amsfonts}
\usepackage{cases}
\usepackage{graphicx}
\usepackage{color}
\usepackage{multirow}
\usepackage{booktabs}
\usepackage{relsize}
\usepackage{stfloats}
\usepackage{xcolor}
\usepackage{enumerate}
\usepackage{float}
\usepackage{setspace}
\usepackage{verbatim}
\usepackage{bm}
\usepackage{xfrac}
\usepackage{subcaption}
\usepackage{dsfont}
\usepackage{lipsum}
\usepackage{hyperref}
\usepackage{algorithm}
\usepackage{algpseudocode}

\usepackage{pgfplots}
\pgfplotsset{compat=1.17} 

\newtheorem{remark}{Remark}

\newtheorem{proposition}{Proposition}

\newtheorem{corollary}{Corollary}

\newcommand{\rom}[1]{\uppercase\expandafter{\romannumeral #1\relax}}

\allowdisplaybreaks

\begin{document}

\title{Quantized Zero-Energy RIS: Residual Phase Modeling and Outage Analysis}
\author{Dimitrios~Tyrovolas,~\IEEEmembership{Member,~IEEE}, Sotiris A. Tegos,~\IEEEmembership{Senior Member,~IEEE}, Kunrui Cao,~\IEEEmembership{Senior Member,~IEEE}, Yue Xiao,~\IEEEmembership{Member,~IEEE,} Panagiotis D. Diamantoulakis,~\IEEEmembership{Senior Member,~IEEE}, \\ Nikos C. Sagias,~\IEEEmembership{Senior Member,~IEEE}, Stylianos D. Asimonis,~\IEEEmembership{Senior Member,~IEEE}, \\ Christos K. Liaskos and George K.~Karagiannidis,~\IEEEmembership{Fellow,~IEEE}

\thanks{ 
D. Tyrovolas, S. A. Tegos, P. D. Diamantoulakis, and G. K. Karagiannidis are with the Department of Electrical and Computer Engineering, Aristotle University of Thessaloniki, 54124 Thessaloniki, Greece (e-mail: tyrovolas@auth.gr, tegosoti@auth.gr, padiaman@auth.gr, geokarag@auth.gr).

K. Cao is with Information Support Force Engineering University, Wuhan 430035, China, and also with the School of Information and Communications, National University of Defense Technology, Wuhan 430035, China (e-mail: krcao@nudt.edu.cn).

Y. Xiao is with the School of Information Science and Technology, Southwest Jiaotong University, 610031, Chengdu, China (e-mail: alice\_xiaoyue@hotmail.com)

N. C. Sagias is with the Department of Informatics and Telecommunications, University of Peloponnese, 22131 Tripoli,
Greece (e-mail: nsagias@uop.gr).

S. D. Asimonis is with the Department of Electrical and Computer Engineering, University of Patras, 26504 Patras, Greece (e-mail: s.asimonis@upatras.gr).

C. K. Liaskos is with the Computer Science Engineering Department, University of Ioannina, Ioannina, and Foundation for Research and Technology Hellas (FORTH), Greece (e-mail: cliaskos@ics.forth.gr).

}
}
\maketitle

\begin{abstract}
Zero-energy reconfigurable intelligent surfaces (zeRISs) have recently emerged as a promising solution for enabling energy-efficient and scalable programmable wireless environments (PWEs) by harvesting their operational energy from impinging radio-frequency signals. However, the operation of zeRIS-assisted systems is inherently constrained by the coupling between energy harvesting and signal reflection, a dependency that becomes more intricate under practical hardware limitations such as finite-resolution phase control. In this paper, we develop a comprehensive analytical framework for zeRIS-assisted communication systems operating under quantized phase shifts and harvest-and-reflect (HaR) schemes. Specifically, we analyze the joint energy–data rate outage probability and the energy efficiency under time switching and element splitting schemes, considering both transmitter-side and user-side deployment scenarios. By explicitly modeling the residual phase error induced by quantization and incorporating its statistical properties into the analysis, we show that quantization jointly affects energy harvesting and signal reflection, thereby inducing non-trivial trade-offs. As a result, the presented framework enables accurate performance evaluation and reveals critical design trade-offs for the selection of the phase resolution, and the applied HaR scheme in zeRIS-assisted wireless networks.
\end{abstract}

\begin{IEEEkeywords}
Zero-energy Reconfigurable Intelligent Surfaces (zeRIS), Phase Quantization, Residual Phase Error, Energy Efficiency, Performance Analysis
\end{IEEEkeywords}

\section{Introduction}

In next-generation wireless networks, emerging applications such as immersive extended reality and the metaverse impose stringent requirements on reliability, latency, and data rate, while simultaneously demanding that the wireless environment adapts to diverse user objectives \cite{Holographic, 6GMetaverse}. To meet these requirements, programmable wireless environments (PWEs) have been introduced as a paradigm in which electromagnetic wave propagation is treated as a controllable and software-defined process \cite{Tutorial,PWELiaskos}. In more detail, the PWE paradigm relies on the deployment of distributed reconfigurable nodes that dynamically adjust their electromagnetic response and collectively shape the propagation conditions across the environment. However, since effective control of the wireless channel requires the manipulation of both large-scale and small-scale propagation effects, achieving the desired performance inherently necessitates a dense deployment of such nodes, which in turn introduces critical challenges in terms of energy consumption and scalability \cite{PWELiaskos}. In such settings, sustaining the operation of a large number of reconfigurable nodes through conventional power supply mechanisms becomes impractical, necessitating alternative solutions based on self-sustained operation. This leads to the need for reconfigurable nodes that can operate as zero-energy devices (ZEDs), harvesting energy from ambient radio-frequency signals while maintaining their capability to dynamically control the wireless propagation environment \cite{zed}. Therefore, the realization of PWE nodes that operate in a self-sustainable manner while preserving their ability to control the wireless propagation environment becomes essential for enabling scalable PWEs where conventional power supply mechanisms fail to sustain continuous operation.

Among the technologies envisioned to support energy-efficient PWEs, zero-energy reconfigurable intelligent surfaces (zeRISs) have emerged as a promising approach to enable autonomous and energy-efficient operation \cite{zeris}. By integrating energy harvesting mechanisms into the metasurface architecture and exploiting their absorption functionality \cite{pitilakis,SAMKunrui}, zeRISs are capable of harvesting the required operational energy directly from impinging radio-frequency signals, thus eliminating the need for dedicated power sources \cite{zeris, wirelesspoweredris}. At the same time, through appropriate configuration of their reflecting elements, zeRISs retain the ability to dynamically control the wireless propagation environment and realize functionalities such as beam steering and beam splitting. This capability is supported at the element level, where a portion of the incident electromagnetic power is absorbed and converted into electrical energy to support the controller and the impedance tuning circuitry, while the remaining portion is utilized to manipulate the reflected wave according to the PWE objectives \cite{kunruizeris}. As a result, the operation of zeRISs relies on a shared use of the incident signal that simultaneously supports both energy harvesting and wave manipulation, rendering the amount of harvested energy and the effectiveness of the applied electromagnetic functionality inherently coupled \cite{zeris}. Consequently, understanding the extent to which zeRISs can sustain their operation while delivering the desired PWE services becomes critical, motivating the need to systematically characterize their achievable performance.

\subsection{State-of-the-art}
An increasing amount of literature has focused on understanding how reconfigurable intelligent surfaces (RISs) can operate in a self-sustainable manner, thereby establishing the foundations of zeRIS as autonomous components of PWEs \cite{zeris,comparing}. In this direction, the authors of \cite{kunruizeris} analyzed a zeRIS-assisted network from the perspectives of security, reliability, and energy efficiency, thus showing that the zeRIS concept can be embedded into meaningful communication-theoretic tradeoffs beyond mere feasibility. Along the same line, \cite{DingZeris} studied a zeRIS-assisted communication architecture with noise modulation and interference-based energy harvesting, revealing that zeRIS operation can also be supported through more sophisticated signal and absorption mechanisms. Furthermore, the authors of \cite{LUYAO} considered a cooperative zero-energy double-RIS architecture for wireless power transfer, indicating that the zeRIS paradigm can be extended to multi-surface scenarios, thus further supporting its applicability as a PWE component. Building on these developments, the authors of \cite{OCTAVIAZERIS} investigated a NOMA-based zeRIS-empowered backscatter communication system with energy-efficient resource management, demonstrating that zeRIS operation can be integrated with practical resource allocation, amplitude reflection control, and energy-harvesting constraints in more advanced communication architectures. Finally, recent works such as \cite{SRIS, ARES} further support the practical feasibility of zeRIS operation by demonstrating RF-powered reconfigurable surfaces. Therefore, these studies collectively show that zeRISs have already evolved from a feasibility-oriented concept into a flexible and practically viable communication framework capable of supporting diverse operating objectives and architectural realizations.

The design and performance implications of zeRIS-assisted systems have also been examined from a resource allocation and system optimization perspective, providing further insight into their behavior under practical constraints. In more detail, the authors of \cite{ntontin1, Chu2022,VehicularZeris} examined the feasibility of enabling autonomous zeRIS operation through energy harvesting, thereby establishing the fundamental conditions under which zeRISs can operate without dedicated power sources. Building on this foundation, the authors of \cite{MECZeris,luyao2,robust,Pan2022,Yang2025} investigated zeRIS-assisted wireless networks under performance-oriented design objectives, showing that energy constraints directly influence beamforming design, transmit power allocation, and overall system performance across different communication scenarios. Moreover, the authors of \cite{Laue2023} showed that zeRIS implementation also affects practical procedures such as beam acquisition and configuration, while \cite{LaueCodebook, YuZheng2023} demonstrated that the achievable performance depends on the adopted energy management strategy, including different signal-sharing mechanisms and operating conditions. Taken together, these studies indicate that the achievable performance of zeRISs depends not only on the PWE objectives, but also on how the impinging signal is manipulated to support both energy harvesting and electromagnetic functionality across different propagation environments.

\subsection{Motivation \& Contribution}

The practical realization of zeRIS-assisted systems requires that the electromagnetic functionality imposed by the surface is implemented through hardware-constrained control mechanisms. In particular, the configuration of zeRIS elements is achieved through discrete phase shifts, where the applied values are quantized and determined by the underlying circuit design \cite{Hongliang2020,badiu2020communication}. This discretization limits the ability of the zeRIS to shape the wireless propagation environment, as the imposed phase configuration deviates from the ideal continuous control typically assumed in analysis. At the same time, higher-resolution control mechanisms incur increased power consumption at the element level, directly affecting the energy available to sustain zeRIS operation \cite{LaueCodebook}. As a result, the practical implementation of zeRIS functionality gives rise to a non-trivial interaction between electromagnetic performance and energy consumption, which becomes more intricate under different propagation conditions and deployment scenarios. In particular, the effect of phase quantization may vary significantly across channel conditions, where inaccurate characterization of its impact can lead to misleading performance predictions, and consequently, non-optimal zeRIS design. Therefore, to the best of the authors’ knowledge, no existing work has explicitly characterized the impact of phase quantization on zeRIS-assisted systems by jointly accounting for both energy sustainability and programmable wireless functionality under realistic propagation conditions.

In this paper, we investigate the performance of a zeRIS-assisted communication system under practical hardware-constrained operation, where the applied phase configuration is affected by quantization. In particular, we consider the \emph{time switching (TS)} and \emph{element splitting (ES)} schemes, and examine both \emph{transmitter (Tx)-side} and \emph{user equipment (UE)-side} zeRIS deployment scenarios. The main contributions of this work can be summarized as follows:

\begin{itemize}
\item We develop a mathematical framework for zeRIS-assisted systems with quantized phase shifts by explicitly modeling the resulting residual phase error and incorporating its statistical properties into the performance analysis. Based on this framework, we derive closed-form expressions for the joint energy–data rate outage probability under the considered HaR schemes and deployment scenarios, enabling the characterization of the interplay between energy sustainability and communication under quantized operation.
\item We investigate the impact of phase quantization on zeRIS performance across different propagation conditions, HaR schemes, and deployment scenarios, showing that the effect of quantization is strongly dependent on the channel characteristics and the relative positioning of the zeRIS in the network.
\item We reveal non-trivial design trade-offs induced by quantized operation, demonstrating that increasing the phase resolution does not necessarily improve performance due to the associated increase in energy consumption, while also showing that simplified or inaccurate characterization of quantization effects may lead to ineffective zeRIS designs.
\end{itemize}

\subsection{Structure}
The remainder of the paper is organized as follows. Section II presents the system model, including the considered HaR schemes and deployment scenarios. Section III develops the statistical characterization of the residual phase error, and provides the performance metrics. Section IV presents the numerical results, and finally Section V concludes the paper.

\section{System Model}

We consider a downlink system where a single-antenna Tx communicates with a single-antenna UE through a zeRIS equipped with \(N\) passive reflecting elements. The received signal is obtained as the coherent superposition of the signals reflected by the zeRIS and is given by
\begin{equation}
\small
y = \sqrt{\ell_p P_t G} \sum_{i=1}^N |h_{1i}| |h_{2i}| e^{j\phi_i} x + n,
\end{equation}
where \(x\) denotes the transmitted symbol with unit power, \(P_t\) is the transmit power, \(G = G_t G_r\) is the combined antenna gain, and \(n \sim \mathcal{CN}(0,\sigma^2)\) represents the additive noise. Additionally, \(h_{1i}\) and \(h_{2i}\) denote the channel coefficients of the Tx-to-zeRIS and zeRIS-to-UE links associated with the \(i\)-th reflecting element, respectively, \(\ell_p = \ell_1 \ell_2\) captures the end-to-end path loss and is equal to 
\begin{equation}
\small
\ell_u = C_0 d_u^{-a_u}, \quad u \in \{1,2\},
\end{equation}
where \(d_1\) denotes the Tx-zeRIS distance, \(d_2\) denotes the zeRIS-UE distance, \(a_1\) and \(a_2\) are the corresponding path loss exponents, and \(C_0 = \lambda^2 / (16 \, \pi^2)\) is the path loss constant determined by the carrier wavelength \(\lambda\). Moreover, the phase term \(\phi_i\) accounts for both the propagation and the controllable reflection effects and is given by \(\phi_i = \omega_i + \arg(h_{1i}) + \arg(h_{2i})\), where \(\arg(\cdot)\) denotes the phase of a complex channel coefficient. Finally, \(\omega_i\) represents the phase induced by the zeRIS and is selected from the finite set \(\mathcal{Q} = \{0, \Delta, \dots, (2^q-1)\Delta\}\), with \(\Delta = {2\pi}/{2^q}\) and \(q\) denoting the phase resolution in bits with \(2^q\) discrete phase states, through a quantization mapping \(\omega_i = \mathbb{Q}(\cdot)\) applied to the desired continuous phase. Consequently, the received signal is directly shaped by the joint effect of the wireless channel phases and the discrete control imposed by the zeRIS.

Building on this signal model, the zeRIS operates as a self-sustained programmable surface that relies solely on the impinging electromagnetic waves to support its functionality. In particular, each reflecting element is equipped with circuitry that enables both energy harvesting and controllable reflection, such that part of the received power is converted into electrical energy and used to configure the reflecting elements and operate the zeRIS controller, which coordinates the phase profile across the surface \cite{zeris}. To facilitate the initiation of this process, each element is assumed to exhibit a minimal activation capability, which can be realized through an initial stored charge or through passive biasing induced by the impinging electromagnetic field, thereby enabling the energy harvesting circuitry to operate without an external power supply. In this context, the total energy consumption depends on the number of elements and their hardware characteristics. Specifically, each reflecting element consumes a power denoted by \(P_{\mathrm{elem}}\), which is determined by the phase resolution \(q\), since the realization of discrete phase shifts requires switching components implemented with PIN diodes, where each additional quantization bit introduces an additional diode to represent the corresponding phase states. As a result, the element power consumption scales linearly with the phase resolution and can be expressed as \(P_{\mathrm{elem}} = q P_{\mathrm{PIN}}\), assuming identical phase resolution across all elements, where \(P_{\mathrm{PIN}}\) denotes the power consumption of a single diode \cite{alexandropoulosee}. In addition, a constant power term denoted by \(P_{\mathrm{ctrl}}\) is required to support the operation of the control circuitry. Therefore, the total energy requirement is determined by the elements that participate in the imposed configuration together with the controller consumption and must be fully supported by the harvested energy. As a result, the ability of the zeRIS to effectively shape the wireless channel is inherently constrained by the available energy, establishing a direct coupling between the propagation environment, the surface configuration, and the achievable system performance.

\subsection{zeRIS Deployment}
The inherent dependence of both the reflected signal and the harvested energy on the propagation conditions highlights the importance of the zeRIS spatial configuration in determining the statistical behavior of the system. In PWEs, where multiple surfaces are deployed to shape the propagation in a coordinated manner, the placement of each surface is primarily dictated by the double path loss of the cascaded channel. Since the overall attenuation is minimized when either the Tx-zeRIS or the zeRIS-UE distance is small, practical deployments tend to position the zeRIS either close to the Tx or in proximity to the users, particularly in dense scenarios where reliable coverage is required. In more detail, when the zeRIS is located near the Tx, i.e., Tx-side deployment, the Tx-zeRIS link is dominated by a line-of-sight (LoS) component with $|h_{1i}|=1$ and $\arg(h_{1i})=\mod(\frac{2\pi d_1}{\lambda}, 2\pi)$, while the zeRIS-UE link experiences fading. Conversely, when the zeRIS is placed near the UE, i.e., UE-side deployment, the zeRIS-UE link becomes LoS with $|h_{2i}|=1$ and $\arg(h_{2i})=\mod(\frac{2\pi d_2}{\lambda}, 2\pi)$, whereas the Tx-zeRIS link is subject to fading. In both cases, the amplitude of the fading link is modeled as a Nakagami-$m$ random variable (RV) with probability density function
\begin{equation}
\small
f_{|h|}(x)=\frac{2m_n^{m_n}}{\Gamma(m_n)\Omega^{m_n}}x^{2{m_n}-1}\exp\left(-\frac{m_n}{\Omega}x^2\right), \quad x\geq 0,
\end{equation}
where $m_n$ and $\Omega$ denote the shape and spread parameters, respectively \cite{TEGOSTVT}. Moreover, the phase of the fading link is described by \cite{Abdi}
\begin{equation} \label{phase_distribution}
\small
f_p(\theta) = \frac{e^{\kappa \cos\left(\theta - \frac{2\pi d_u}{\lambda}\right)}}{2 \pi (K+1) I_0(\kappa)} + \frac{K}{K+1} \delta\left(\theta - \frac{2\pi d_u}{\lambda}\right),
\end{equation}
where $u\in\{1,2\}$ depending on the link under consideration, $\delta(\cdot)$ is the Dirac delta function, $\kappa \geq 0$ is the concentration parameter that controls the dispersion of the phase around its mean value, with $\kappa=0$ corresponding to a uniform phase distribution and larger values indicating a stronger phase concentration, $I_0(\cdot)$ is the zero-order modified Bessel function of the first kind, and $K \geq 0$ denotes the Rice factor. In particular, by relating the fading severity to $m_n$, $K$ can be expressed as \cite{kappa}
\begin{equation}
\small
K \approx \frac{\sqrt{m^2-m}}{m-\sqrt{m^2-m}}.
\end{equation}
In this way, the two deployment regimes provide a tractable and physically meaningful basis for modeling both the amplitude and phase statistics of the cascaded channels.

\subsection{Harvest-and-Reflect (HaR) Schemes}

The coupled dependence of the reflected signal and the harvested energy on the channel phases implies that the zeRIS operation depends on the applied phase configuration, which characterizes how the impinging field is utilized for energy acquisition and signal reflection. In particular, aligning the phases with the Tx-zeRIS link maximizes the harvested energy, leading to $\omega_i^{\mathrm{h}} = \mathbb{Q}(-\arg(h_{1i}))$, whereas aligning the phases over the cascaded channel enables coherent signal combining at the receiver, yielding $\omega_i^{\mathrm{c}} = \mathbb{Q}(-\arg(h_{1i}) - \arg(h_{2i}))$. This interplay between energy harvesting and signal reflection naturally gives rise to harvest-and-reflect (HaR) schemes, in which the zeRIS operation is structured around these two competing objectives, motivating the schemes examined in this work, namely TS and ES.

\subsubsection{Time Switching (TS)}
In the TS scheme, the zeRIS alternates between energy harvesting and signal reflection over the transmission interval, such that the block of duration $T$ is divided into two portions, where a fraction $\tau \in [0,1]$ is allocated to energy harvesting and the remaining fraction is used for information transmission. During the information transmission phase, the achievable rate is given by
\begin{equation}
\small
\begin{split}
&R_{\mathrm{TS}}= (1-\tau) \\
&\times\log_2\Bigg(1 \!+\! \gamma_t G \ell_p \left| \sum_{i=1}^{N} |h_{1i}| |h_{2i}| e^{j(\omega_i^{\mathrm{c}} + \arg(h_{1i}) + \arg(h_{2i}))} \right|^2 \Bigg) ,
\end{split}
\end{equation}
where $\gamma_t = \frac{P_t}{\sigma^2}$. During the energy harvesting phase, the reflecting elements are configured for energy acquisition, yielding
\begin{equation}
\small
Q_{\mathrm{TS}} = \tau T \zeta P_t G_t \ell_1 \left| \sum_{i=1}^{N} |h_{1i}| e^{j(\omega_i^{\mathrm{h}} + \arg(h_{1i}))} \right|^2,
\end{equation}
where $\zeta$ denotes the energy conversion efficiency, accounting for both the inherent inefficiencies of the harvesting circuitry and the power losses associated with its operation, and thus capturing the effective energy available to sustain the zeRIS functionality, while the corresponding energy consumption is given by
\begin{equation}
\small
E_{\mathrm{TS}} = T\left((1-\tau) N P_{\mathrm{elem}} + P_{\mathrm{ctrl}}\right).
\end{equation}

\subsubsection{Element Splitting (ES)}
In contrast to TS, the ES scheme distributes the energy harvesting and beam-steering processes across the reflecting elements, such that a subset of $N_1$ elements is dedicated to energy harvesting, while the remaining $N_2$ elements support information transmission over the entire interval, with $N_1 + N_2 = N$. Under this configuration, phase shifts are applied only to the elements participating in signal reflection, while the remaining elements are configured for energy harvesting. Accordingly, the received signal is formed by the transmitting subset as
\begin{equation}
\small
y_{\mathrm{ES}} = \sqrt{\ell_p P_t G} \sum_{i=1}^{N_2} |h_{1i}| |h_{2i}| e^{j(\omega_i^{\mathrm{c}} + \arg(h_{1i}) + \arg(h_{2i}))} x + n,
\end{equation}
and the corresponding achievable rate is given by
\begin{equation}
\small
R_{\mathrm{ES}}\! =\! \log_2\!\left(1 \!+ \!\gamma_t G \ell_p \left| \sum_{i=1}^{N_2} |h_{1i}| |h_{2i}| e^{j(\omega_i^{\mathrm{c}} + \arg(h_{1i}) + \arg(h_{2i}))} \right|^2 \right).
\end{equation}
Meanwhile, the harvested energy is obtained from the complementary subset as
\begin{equation}
Q_{\mathrm{ES}} = T \zeta P_t G_t \ell_1 \left| \sum_{i=1}^{N_1} |h_{1i}| e^{j(\omega_i^{\mathrm{h}} + \arg(h_{1i}))} \right|^2,
\end{equation}
while the corresponding energy consumption is given by
\begin{equation}
E_{\mathrm{ES}} = T\left(N_2 P_{\mathrm{elem}} + P_{\mathrm{ctrl}}\right).
\end{equation}

\section{Statistical Characterization of the Residual Phase Error}

Building on the considered system model, the operation of the zeRIS critically depends on the phase shifts applied across its reflecting elements. However, due to finite-resolution phase control, the applied phase cannot perfectly match its ideal value, which gives rise to a residual phase error. Since this residual phase error directly affects both the harvested energy and the coherent combination of the reflected signals, its statistical characterization is essential for the analysis of zeRIS with quantized phase shifts. Therefore, in this section, we therefore derive the probability density function (PDF) of the residual phase error and then establish its circular moments, which constitute the key statistical quantities required for the analytical treatment of the considered zeRIS-assisted network. In this direction, the PDF of the residual phase error under the considered fading model is presented in the following proposition.
\begin{proposition}
The PDF of the residual phase error $\varepsilon$ induced by the $q$-bit quantization can be expressed as
\begin{equation}\label{error_distribution}
\small
\begin{split}
f_{\varepsilon}(\varepsilon)
&=
\frac{1}{(K+1)2\pi I_0(\kappa)}
\sum_{m=0}^{2^q-1}
\exp\!\Big(\kappa \cos\!\big(m\Delta-\varepsilon+\tfrac{2\pi d_u}{\lambda}\big)\Big) \\
&\quad + \frac{K}{K+1}\,\delta(\varepsilon-\varepsilon_d), \quad  \varepsilon \in [-\Delta/2,\Delta/2],
\end{split}
\end{equation}
where $\varepsilon_d=\operatorname{mod}\!\left(\mathbb{Q}\!\left(-\tfrac{2\pi d_u}{\lambda}\right)+\tfrac{2\pi d_u}{\lambda}
+\tfrac{\Delta}{2},\,\Delta\right)-\tfrac{\Delta}{2}$.
\end{proposition}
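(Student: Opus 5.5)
Throughout, $\theta$ denotes the phase of the fading link, distributed according to \eqref{phase_distribution}. The plan is to write the residual phase error as a deterministic function of $\theta$ and push the density \eqref{phase_distribution} through it.

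For the coherent configuration we take the Tx-side case, where $\arg(h_{1i})$ is a known constant and the only random phase is $\theta=\arg(h_{2i})$ (the UE-side case is symmetric). The deterministic LoS phase is absorbed into the quantizer reference, so the error is
\[
\varepsilon = \mathbb{Q}(-\theta)+\theta,
\]
reduced modulo $2\pi$. For a nearest-point quantizer on the lattice $\{m\Delta\}$, we have $\mathbb{Q}(-\theta)=m\Delta$ exactly when $-\theta\in[m\Delta-\Delta/2,\,m\Delta+\Delta/2)$ mod $2\pi$. This gives the equivalent description
\[
\varepsilon=\operatorname{mod}(\theta+\Delta/2,\Delta)-\Delta/2\in[-\Delta/2,\Delta/2].
\]
Thus $\varepsilon$ is the wrapping of the circular variable $\theta$ onto an interval of length $\Delta$.

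We then split the density \eqref{phase_distribution} into its two parts and treat each separately.

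\textbf{Continuous part.} Fix $\varepsilon\in[-\Delta/2,\Delta/2]$. Its preimages are $\theta_m=\varepsilon-m\Delta$ (mod $2\pi$), for $m=0,\dots,2^q-1$, one in each quantization cell. On each cell the map $\theta\mapsto\varepsilon$ is a translation with unit Jacobian. The change-of-variables formula for a many-to-one map therefore gives
\[
f_\varepsilon^{\mathrm{cont}}(\varepsilon)=\sum_m f_p^{\mathrm{cont}}(\theta_m).
\]
Substituting the von Mises kernel and using the evenness of $\cos$, we get $\cos(\theta_m-2\pi d_u/\lambda)=\cos(m\Delta-\varepsilon+2\pi d_u/\lambda)$. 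This yields the stated sum with prefactor $1/((K+1)2\pi I_0(\kappa))$. The index relabeling $m\to -m$ mod $2^q$ leaves the set of preimages unchanged, so the sign convention does not matter.

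\textbf{Point mass.} The Dirac component at $\theta=2\pi d_u/\lambda$ with weight $K/(K+1)$ maps to a point mass at
\[
\varepsilon_d=\mathbb{Q}(-2\pi d_u/\lambda)+2\pi d_u/\lambda,
\]
reduced into $[-\Delta/2,\Delta/2)$ by the $\operatorname{mod}(\cdot+\Delta/2,\Delta)-\Delta/2$ operation. This is exactly the stated expression.

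\textbf{Check.} Summing the continuous part over cells integrates the von Mises density over the full circle, giving $1/(K+1)$. Adding the atom weight $K/(K+1)$ gives total mass one.

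\textbf{Main obstacle.} The delicate step is the bookkeeping of the modular reduction. We must confirm that the $2^q$ cells exactly tile the circle, so that every $\varepsilon$ has exactly one preimage per $m$ with no boundary double counting; the half-open cells make the endpoints measure-zero. We must also confirm that the $\operatorname{mod}$ wrapping is compatible with the $\operatorname{mod}(\cdot,2\pi)$ definition of the LoS phases.

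For the harvesting configuration $\omega_i^{\mathrm h}$ in the UE-side case, the same argument applies verbatim with $\theta=\arg(h_{1i})$ and $u=1$.
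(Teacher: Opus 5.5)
Your proposal is correct and follows essentially the same route as the paper. You write $\varepsilon=\mathbb{Q}(-\theta)+\theta$, which is the paper's $\varepsilon=\omega_i-\phi$ with $\phi=-\theta$, a sign the paper leaves implicit. You then sum the phase density over the $2^q$ preimages $\varepsilon-m\Delta$, one per quantization cell, and send the LoS Dirac atom to the single point $\varepsilon_d$; your mod-$\Delta$ description, unit-Jacobian argument and normalization check make the paper's bookkeeping explicit. One caveat, which the paper shares, is that ``absorbing'' the Tx-side LoS phase $2\pi d_1/\lambda$ into the quantizer reference is a modeling convention rather than an identity. With the fixed lattice $\{m\Delta\}$ it would replace $2\pi d_u/\lambda$ by $2\pi(d_1+d_2)/\lambda$ in both the von Mises term and $\varepsilon_d$, so it is better stated as an assumption.
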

\begin{IEEEproof}
Let $\phi$ denote the desired continuous phase that is quantized through the mapping $\omega_i=\mathbb{Q}(\phi)$, where $\mathbb{Q}(\cdot)$ selects the closest element from the finite set $\mathcal{Q}=\{0,\Delta,\dots,(2^q-1)\Delta\}$. By definition, the residual phase error is given by
\begin{equation}\label{eq:eps_def}
\small
\varepsilon=\omega_i-\phi.
\end{equation}
Since $\mathbb{Q}(\cdot)$ maps $\phi$ to its nearest quantization level, it follows directly from \eqref{eq:eps_def} that $\varepsilon \in [-\Delta/2,\Delta/2]$.

For a given quantization level $m \, \Delta$, with $m \in \{0,\dots,2^q-1\}$, the condition $\mathbb{Q}(\phi)=m\Delta$ implies $\phi = m \, \Delta - \varepsilon$. Moreover, since the quantization mapping partitions the phase space into $2^q$ disjoint regions associated with the reconstruction levels $\{m \, \Delta\}$, the PDF of $\varepsilon$ is obtained by summing the contributions from all quantization levels, which yields
\begin{equation}\label{eq:fe_sum}
\small
f_{\varepsilon}(\varepsilon)
=
\sum_{m=0}^{2^q-1} f_{\phi}(m\Delta-\varepsilon).
\end{equation}
Therefore, by substituting \eqref{phase_distribution} into \eqref{eq:fe_sum}, we obtain
\begin{equation}\label{eq:fe_expanded}
\small
\begin{split}
f_{\varepsilon}(\varepsilon)
&=
\sum_{m=0}^{2^q-1}
\Bigg[
\frac{\exp\!\left(\kappa\cos\!\left(m\Delta-\varepsilon+\frac{2\pi d_u}{\lambda}\right)\right)}
{2\pi (K+1) I_0(\kappa)} \\
&\hspace{2.4cm}
+\frac{K}{K+1}\delta\!\left(m\Delta-\varepsilon+\frac{2\pi d_u}{\lambda}\right)
\Bigg].
\end{split}
\end{equation}
Finally, each Dirac term is non-zero only if its argument satisfies $\varepsilon= m\Delta+\frac{2\pi d_u}{\lambda}$. However, due to the quantization mapping, the deterministic phase $-\frac{2\pi d_u}{\lambda}$ is assigned to a unique quantization level $Q\!\left(-\frac{2\pi d_u}{\lambda}\right)= m_d \, \Delta$. Therefore, only the term corresponding to $m=m_d$ satisfies the above condition, while for all other values of $m$ the argument of the Dirac function is non-zero and hence those terms vanish. As a result, for $m=m_d$, the corresponding residual phase error follows directly from \eqref{eq:eps_def} as
\begin{equation}
\small
\varepsilon = \mathbb{Q}\!\left(-\tfrac{2\pi d_u}{\lambda}\right)
+\tfrac{2\pi d_u}{\lambda},
\end{equation}
and since the residual phase error is defined within the interval $[-\Delta/2, \, \Delta/2]$, the above expression is mapped to this interval through a centered modulo-$\Delta$ operation that shifts the value to $[-\Delta/2, \, \Delta/2]$, which yields $\varepsilon_d$. Therefore, \eqref{eq:fe_expanded} can be rewritten as in \eqref{error_distribution}, which concludes the proof.
\end{IEEEproof}

Following the PDF characterization of the residual phase error, it is important to examine the scenario where the channel affected by fading does not include a LoS component. In this case, the channel's amplitude is Rayleigh distributed and its phase is uniformly distributed over $[0, \,  2 \, \pi)$, which is reflected in the adopted phase model by setting $\kappa=0$ and $K=0$, removing the deterministic component \cite{TEGOSTVT}. Under these conditions, the residual phase error admits a simplified distribution, as stated in the following corollary.
\begin{corollary}
For $\kappa=0$ and $K=0$, the residual phase error $\varepsilon$ is uniformly distributed over the interval $[-\Delta/2,\Delta/2]$, i.e.,
$f_{\varepsilon}(\varepsilon)=\frac{1}{\Delta}$, where $\varepsilon \in \left[-\frac{\Delta}{2},\frac{\Delta}{2}\right]$.
\end{corollary}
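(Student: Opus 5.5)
The plan is to specialize the PDF of Proposition~1 in \eqref{error_distribution} by setting $\kappa=0$ and $K=0$, and then check that the result is a valid uniform density on $[-\Delta/2,\Delta/2]$.

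First, with $K=0$ the weight $\frac{K}{K+1}$ in front of the Dirac term vanishes. This removes the deterministic contribution at $\varepsilon_d$, and the prefactor $\frac{1}{K+1}$ of the continuous part reduces to one.

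Second, with $\kappa=0$ we have $I_0(0)=1$, and every exponential term becomes $\exp(0)=1$, whatever the values of $m$, $\varepsilon$ and $d_u/\lambda$. The sum over the $2^q$ quantization levels then collapses to $2^q$, giving
\begin{equation*}
\small
f_{\varepsilon}(\varepsilon)=\frac{2^q}{2\pi}=\frac{1}{\Delta},\qquad \varepsilon\in\left[-\frac{\Delta}{2},\frac{\Delta}{2}\right],
\end{equation*}
where we use $\Delta=2\pi/2^q$.

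Finally, as a consistency check, $\int_{-\Delta/2}^{\Delta/2}\frac{1}{\Delta}\,d\varepsilon=1$, so this is a proper density.

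The argument is a direct substitution, so there is no real obstacle. The only subtle point is the structure of \eqref{eq:fe_sum}: the $2^q$ quantization cells of width $\Delta$ exactly tile $[0,2\pi)$, so summing a density that is uniform over the circle across the cells gives exactly $2^q/(2\pi)$. As an independent sanity check, one can argue directly that a uniformly distributed phase, shifted by the deterministic LoS phase and reduced modulo $2\pi$, stays uniform. Its offset from the nearest quantization level is therefore uniform over a cell of width $\Delta$.
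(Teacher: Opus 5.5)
Your proposal is correct and follows the paper's proof: substituting $\kappa=0$ and $K=0$ into \eqref{error_distribution} removes the Dirac term, sets $I_0(0)=1$ and every exponential to one, so the sum collapses to $2^q/(2\pi)=1/\Delta$. The normalization check and the direct uniform-phase argument are valid extras that the paper does not include.
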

\begin{IEEEproof}
By setting $\kappa=0$ and $K=0$ in \eqref{error_distribution}, we obtain
\begin{equation}\label{pdfunif}
\small
f_{\varepsilon}(\varepsilon) = \frac{2^q}{2\pi}.
\end{equation}
Finally, by using $\Delta=\frac{2\pi}{2^q}$, \eqref{pdfunif} can be rewritten as $\frac{1}{\Delta}$, which completes the proof.
\end{IEEEproof}

Building on the statistical characterization of the residual phase error, it is now essential to derive tractable quantities that can be directly utilized in the performance analysis of a zeRIS-assisted network. In this context, the circular moments capture the effect of phase quantization on the coherent combination of the reflected components and therefore provide a tractable way to connect the residual phase error model with the performance of a zeRIS-assisted network. To this end, we next derive the expression of the $n$-th circular moment of the residual phase error. 

\begin{proposition}
The \(n\)-th circular moment of the residual phase error \(\varepsilon\) is given by
\begin{equation}\label{eq:nth_moment}
\small
\begin{split}
\mu_n & (d_u)
=
\frac{1}{2\pi (K+1)}
\sum_{m=0}^{2^q-1}
\Bigg[
\frac{2\sin\!\left(\frac{n\Delta}{2}\right)}{n}  +
\sum_{\ell=1}^{+\infty}
\frac{I_\ell(\kappa)}{I_0(\kappa)}  \\&
\times \Bigg(
e^{j\ell\left(m\Delta+\frac{2\pi d_u}{\lambda}\right)}
\frac{2\sin\!\left(\frac{(n-\ell)\Delta}{2}\right)}{n-\ell}  +
e^{-j\ell\left(m\Delta+\frac{2\pi d_u}{\lambda}\right)}  \\&
\times \frac{2\sin\!\left(\frac{(n+\ell)\Delta}{2}\right)}{n+\ell}
\Bigg)
\Bigg]  + \frac{K}{K+1}e^{jn\varepsilon_d}.
\end{split}
\end{equation}
\end{proposition}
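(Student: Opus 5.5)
The plan is to compute $\mu_n(d_u)=\mathbb{E}[e^{jn\varepsilon}]=\int_{-\Delta/2}^{\Delta/2} e^{jn\varepsilon} f_\varepsilon(\varepsilon)\,d\varepsilon$ directly from the PDF in Proposition~1, treating the continuous (von Mises–type) part and the Dirac part separately. The Dirac part is immediate. Since $\varepsilon_d\in[-\Delta/2,\Delta/2]$ by construction, the sifting property gives $\frac{K}{K+1}e^{jn\varepsilon_d}$, which is the last term of \eqref{eq:nth_moment}. All remaining work concerns the sum over $m$ of $\exp(\kappa\cos(m\Delta-\varepsilon+\frac{2\pi d_u}{\lambda}))$.

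For the continuous part, I would expand each exponential with the Jacobi--Anger (Fourier–Bessel) identity
\begin{equation*}
e^{\kappa\cos\theta}=I_0(\kappa)+2\sum_{\ell=1}^{\infty}I_\ell(\kappa)\cos(\ell\theta)=\sum_{\ell=-\infty}^{\infty}I_\ell(\kappa)e^{j\ell\theta},
\end{equation*}
with $\theta=m\Delta+\frac{2\pi d_u}{\lambda}-\varepsilon$. Dividing by $I_0(\kappa)$ turns the $\ell=0$ term into $1$. The terms with $\pm\ell$ become $\frac{I_\ell(\kappa)}{I_0(\kappa)}\big(e^{j\ell(m\Delta+2\pi d_u/\lambda)}e^{-j\ell\varepsilon}+e^{-j\ell(m\Delta+2\pi d_u/\lambda)}e^{j\ell\varepsilon}\big)$, using $I_{-\ell}=I_\ell$.

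Next I would multiply by $e^{jn\varepsilon}$ and integrate term by term over $[-\Delta/2,\Delta/2]$. The interchange of sum and integral is justified by uniform convergence, since $\sum_\ell I_\ell(\kappa)<\infty$. The elementary integral is
\begin{equation*}
\int_{-\Delta/2}^{\Delta/2}e^{jp\varepsilon}\,d\varepsilon=\frac{2\sin(p\Delta/2)}{p},
\end{equation*}
applied with $p=n$, $p=n-\ell$ and $p=n+\ell$. These produce exactly the three bracketed terms. Restoring the prefactor $\frac{1}{2\pi(K+1)}$ and the outer sum over $m$ then yields \eqref{eq:nth_moment}.

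The only real subtlety is the degenerate case $\ell=n$ in the $n-\ell$ term. There the integrand is constant, and the quotient must be read as its limit $\Delta$. I would state this convention explicitly, interpreting $\frac{2\sin(0\cdot\Delta/2)}{0}$ as $\Delta$. With that convention, the same formula also covers the case $n=0$ as a sanity check: it gives total mass $1$, because $\sum_m 1\cdot\Delta/(2\pi)=1$ and the $\ell\ge1$ terms vanish after summing over $m$ when $2^q\nmid\ell$.

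Optionally, I would verify that for $\kappa=K=0$ the formula reduces to $\frac{2^q}{2\pi}\cdot\frac{2\sin(n\Delta/2)}{n}=\frac{\sin(n\Delta/2)}{n\Delta/2}$, the known sinc moment, consistent with Corollary~1.
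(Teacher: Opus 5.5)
Your proposal is correct and follows essentially the same route as the paper. You sift the Dirac term, expand $e^{\kappa\cos\theta}$ in its Fourier--Bessel series, and integrate $e^{jp\varepsilon}$ term by term with $p\in\{n,n-\ell,n+\ell\}$. Two of your additions are worthwhile clarifications that the paper leaves implicit: the justification of the term-by-term integration via $\sum_\ell I_\ell(\kappa)<\infty$, and the convention that the $\ell=n$ quotient be read as its limit $\Delta$. The latter matters in practice, since it is exactly what is needed to evaluate $\mu_1$ and $\mu_2$, the moments used later in the paper.
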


\begin{IEEEproof}
By definition, the \(n\)-th circular moment of the residual phase error is
\begin{equation}
\small
\mu_n=\mathbb{E}\!\left[e^{jn\varepsilon}\right]
=\int_{-\Delta/2}^{\Delta/2} e^{jn\varepsilon} f_{\varepsilon}(\varepsilon)\, d\varepsilon,
\end{equation}
where $\mathbb{E}[\cdot]$ denoting expectation \cite{badiu2020communication}. By substituting \eqref{error_distribution} into the above expression, we obtain
\begin{equation}\label{25}
\small
\begin{split}
&\mu_n=
\frac{1}{(K+1)2\pi I_0(\kappa)}
\sum_{m=0}^{2^q-1}
\int_{-\Delta/2}^{\Delta/2}
e^{jn\varepsilon}
e^{\kappa \cos\!\left(m\Delta-\varepsilon+\tfrac{2\pi d_u}{\lambda}\right)}
\, d\varepsilon \\
&\quad +
\frac{K}{K+1}
\int_{-\Delta/2}^{\Delta/2}
e^{jn\varepsilon}\delta(\varepsilon-\varepsilon_d)\, d\varepsilon .
\end{split}
\end{equation}
By utilizing the sifting property of the Dirac delta function and expanding $e^{\kappa \, \cos (x)}$ into its Fourier series form, \eqref{25} can be rewritten as
\begin{equation}\label{26}
\small
\begin{split}
\mu_n &
=
\frac{1}{(K+1)2\pi I_0(\kappa)}
\sum_{m=0}^{2^q-1}
\int_{-\Delta/2}^{\Delta/2}
e^{jn\varepsilon}
\Bigg[
I_0(\kappa) \\& +2\sum_{\ell=1}^{\infty} I_\ell(\kappa)
\cos\!\Big(\ell \big(m\Delta-\varepsilon+\tfrac{2\pi d_u}{\lambda}\big)\Big)
\Bigg]
d\varepsilon + \frac{K}{K+1}e^{jn\varepsilon_d}.
\end{split}
\end{equation}
Moreover, by utilizing the identity $\cos (x) =(e^{jx}+e^{-jx}) / 2$ and after some algebraic manipulations, \eqref{26} can be written as
\begin{equation}\label{27}
\small
\begin{split}
&\mu_n=
\frac{1}{2\pi (K+1)}
\sum_{m=0}^{2^q-1}
\Bigg[
\int_{-\Delta/2}^{\Delta/2} e^{jn\varepsilon}\, d\varepsilon  \\& \times \sum_{\ell=1}^{\infty}\!\!\frac{I_\ell(\kappa)}{I_0(\kappa)}
\Bigg(
\!e^{j\ell\left(m\Delta+\frac{2\pi d_u}{\lambda}\right)}
\!\!\!\int_{-\Delta/2}^{\Delta/2} \!\!\! e^{j(n-\ell)\varepsilon}d\varepsilon +\!e^{-j\ell\left(m\Delta+\frac{2\pi d_u}{\lambda}\right)} \\ &\times 
\int_{-\Delta/2}^{\Delta/2} e^{j(n+\ell)\varepsilon}\, d\varepsilon
\Bigg)
\Bigg]  + \frac{K}{K+1}e^{jn\varepsilon_d},
\end{split}
\end{equation}
where after evaluating the integrals in \eqref{27}, we obtain
\begin{equation}
\small
\begin{split}
&\mu_n=
\frac{1}{2\pi (K+1)}
\sum_{m=0}^{2^q-1}
\Bigg[
\frac{e^{jn\Delta/2}-e^{-jn\Delta/2}}{jn} +
\sum_{\ell=1}^{\infty}\frac{I_\ell(\kappa)}{I_0(\kappa)}  \\& \times
\Bigg(
e^{j\ell\left(m\Delta+\frac{2\pi d_u}{\lambda}\right)}
\frac{e^{j(n-\ell)\Delta/2}-e^{-j(n-\ell)\Delta/2}}{j(n-\ell)} \\
&+e^{-j\ell\left(m\Delta+\frac{2\pi d_u}{\lambda}\right)}
\frac{e^{j(n+\ell)\Delta/2}-e^{-j(n+\ell)\Delta/2}}{j(n+\ell)}
\Bigg)\!\Bigg] \!+ \!\frac{K}{K+1}e^{jn\varepsilon_d}.
\end{split}
\end{equation}
Finally, by utilizing the identity $e^{j \, x}-e^{-j \, x}=2 \, j \, \sin(x)$, we obtain \eqref{eq:nth_moment}, which concludes the proof.
\end{IEEEproof}

As can be observed, \eqref{eq:nth_moment} involves an infinite summation that arises from the Fourier-series expansion of \(e^{\kappa \, \cos (x)}\). In the following, we show that this series converges rapidly and can be accurately approximated using a finite number of terms, which is essential for the practical evaluation of the circular moments.

\begin{proposition}
Let \(\mu_n^{(L)}\) denote the \(n\)-th circular moment obtained from \eqref{eq:nth_moment} by truncating the infinite sum to $L$ terms. Then, the truncation error satisfies
\begin{equation}\label{eq:trunc_bound_exp}
\small
\left|\mu_n(d_u)-\mu_n^{(L)}(d_u)\right|
\leq
\frac{2\left[
\exp\!\left(\frac{\kappa}{2}\right)
-
\sum_{\ell=0}^{L}
\frac{1}{\ell!}\left(\frac{\kappa}{2}\right)^\ell
\right]}{(K+1)I_0(\kappa)},
\end{equation}
which monotonically decreases with $L$ and tends to 0 as $L \to \infty$.
\end{proposition}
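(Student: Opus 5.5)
The plan is to bound the tail of the $\ell$-sum in \eqref{eq:nth_moment} term by term and then compare the resulting series of modified Bessel functions with the exponential series.

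\textbf{Step 1: write the error as a tail.} The truncation error is
\begin{equation*}
\small
\mu_n-\mu_n^{(L)}=\frac{1}{2\pi(K+1)}\sum_{m=0}^{2^q-1}\sum_{\ell=L+1}^{\infty}\frac{I_\ell(\kappa)}{I_0(\kappa)}\Big(e^{j\ell(\cdot)}S_{n-\ell}+e^{-j\ell(\cdot)}S_{n+\ell}\Big).
\end{equation*}
Here $S_k=\int_{-\Delta/2}^{\Delta/2}e^{jk\varepsilon}d\varepsilon=2\sin(k\Delta/2)/k$, and $S_0=\Delta$ is understood whenever $n=\ell$.

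\textbf{Step 2: remove the phases and the $m$-sum.} Apply the triangle inequality and use $|e^{\pm j\ell(\cdot)}|=1$. Since $|S_k|\le\Delta$, each bracket is at most $2\Delta$, uniformly in $m$. Summing over the $2^q$ values of $m$ gives a factor $2^q\Delta=2\pi$, which cancels the prefactor $1/(2\pi)$. This leaves
\begin{equation*}
\small
\left|\mu_n-\mu_n^{(L)}\right|\le\frac{2}{(K+1)I_0(\kappa)}\sum_{\ell=L+1}^{\infty}I_\ell(\kappa).
\end{equation*}
The Dirac part $\frac{K}{K+1}e^{jn\varepsilon_d}$ is not truncated, so it cancels exactly.

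\textbf{Step 3: bound the Bessel tail.} This is the main step. I would use the series
\begin{equation*}
\small
I_\ell(\kappa)=\sum_{k\ge0}\frac{(\kappa/2)^{2k+\ell}}{k!\,(k+\ell)!},
\end{equation*}
together with the elementary bound $I_\ell(\kappa)\le\frac{(\kappa/2)^\ell}{\ell!}e^{\kappa^2/4}$, which follows from $(k+\ell)!\ge k!\,\ell!$. This gives $I_\ell(\kappa)$ a coefficient of $(\kappa/2)^\ell/\ell!$, so that $\sum_{\ell>L}(\kappa/2)^\ell/\ell!=e^{\kappa/2}-\sum_{\ell=0}^{L}(\kappa/2)^\ell/\ell!$, which is exactly the bracket in \eqref{eq:trunc_bound_exp}.

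The difficulty is the extra factor $e^{\kappa^2/4}$, which the stated bound does not contain. I would first try to absorb it: write $I_\ell(\kappa)=\frac{(\kappa/2)^\ell}{\ell!}\sum_k\frac{(\kappa/2)^{2k}\ell!}{k!(k+\ell)!}$ and bound the inner sum more tightly for large $\ell$. If this fails for large $\kappa$, I would fall back on the integral representation $I_\ell(\kappa)=\frac1\pi\int_0^\pi e^{\kappa\cos\theta}\cos(\ell\theta)\,d\theta$ or Neumann-type generating-function identities, or else accept the bound with the adjusted constant. Either way, the tail is controlled by the exponential-series remainder, and I expect this comparison to need the most care.

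\textbf{Step 4: monotonicity and convergence.} The bracket equals $\sum_{\ell>L}(\kappa/2)^\ell/\ell!$, which is a sum of nonnegative terms. It therefore decreases monotonically in $L$ and tends to $0$ as $L\to\infty$, because the exponential series converges.
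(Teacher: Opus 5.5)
\textbf{Gap in Step 3.} Your Steps 1, 2 and 4 match the paper's argument. The triangle inequality, $|2\sin(a\Delta/2)/a|\le\Delta$ and $2^q\Delta=2\pi$ give $|\mu_n-\mu_n^{(L)}|\le\frac{2}{(K+1)I_0(\kappa)}\sum_{\ell>L}I_\ell(\kappa)$. The gap is in Step 3. When you bound $\sum_{k\ge0}(\kappa/2)^{2k}/(k!)^2$ by $e^{\kappa^2/4}$, you discard the observation the paper relies on: this series is exactly $I_0(\kappa)$. Using it, $(k+\ell)!\ge k!\,\ell!$ gives directly $I_\ell(\kappa)\le\frac{(\kappa/2)^\ell}{\ell!}I_0(\kappa)$, so the spurious factor never appears. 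Your fallback of accepting the adjusted constant only gives a bound inflated by $e^{\kappa^2/4}/I_0(\kappa)$, which grows super-exponentially in $\kappa$.

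\textbf{The printed bound does not follow, and is false for large $\kappa$.} Inserting $I_\ell/I_0\le(\kappa/2)^\ell/\ell!$ into your Step-2 bound, the $I_0(\kappa)$ cancels. You obtain $\frac{2}{K+1}\big[e^{\kappa/2}-\sum_{\ell=0}^{L}\frac{(\kappa/2)^\ell}{\ell!}\big]$, with \emph{no} $I_0(\kappa)$ in the denominator. This is also all the paper's own derivation yields; its displayed bound carries an extra factor $1/I_0(\kappa)$ that does not follow from its last step.

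Your plan to absorb the factor by sharpening the inner sum cannot work. The inner sum satisfies $\ell!\sum_k\frac{(\kappa/2)^{2k}}{k!(k+\ell)!}\ge1$, since its $k=0$ term equals $1$. Hence $\sum_{\ell>L}I_\ell(\kappa)\ge\sum_{\ell>L}(\kappa/2)^\ell/\ell!$, and the Step-2 inequality can never reach the printed right-hand side. No integral or Neumann representation can help either.

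In fact the printed inequality is false for large $\kappa$. Summing over $m$ annihilates every $\ell$ not divisible by $2^q$. Take $q=1$, $n=1$ and $2\pi d_u/\lambda\equiv\pi/2$. Then the error equals $-\frac{1}{\pi(K+1)}\sum_{2p>L}\frac{4}{4p^2-1}\frac{I_{2p}(\kappa)}{I_0(\kappa)}$, with all terms of one sign. As $\kappa\to\infty$ its modulus tends to $\frac{2}{\pi(K+1)(2\lfloor L/2\rfloor+1)}$. The printed bound, by contrast, is $O(\sqrt{\kappa}\,e^{-\kappa/2})$. Already at $\kappa=10$, $L=1$, the error is about $0.39/(K+1)$ against a bound of about $0.10/(K+1)$.

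So the statement you can prove is the version without $I_0(\kappa)$ in the denominator. With the $I_0$ identification, your Steps 1--4 prove it, and the monotonicity and convergence claims carry over unchanged.
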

\begin{IEEEproof}
Let \(\mu_n^{(L)}\) denote the truncated version of \eqref{eq:nth_moment}, obtained by retaining the first $L$ terms. Thus, the truncation error corresponds to the contribution of the neglected terms and can be bounded as
\begin{equation}
\small
\begin{split}
\left|\mu_n-\mu_n^{(L)}\right|
&\leq
\frac{1}{2\pi (K+1)}
\sum_{m=0}^{2^q-1}
\sum_{\ell=L+1}^{\infty}
\frac{I_\ell(\kappa)}{I_0(\kappa)} \\
&\quad \times
\Bigg(
\left|
\frac{2\sin\!\left(\frac{(n-\ell)\Delta}{2}\right)}{n-\ell}
\right|
+
\left|
\frac{2\sin\!\left(\frac{(n+\ell)\Delta}{2}\right)}{n+\ell}
\right|
\Bigg).
\end{split}
\end{equation}
Moreover, by using the bound $\left|
\frac{2\sin\!\left(\frac{a\Delta}{2}\right)}{a}
\right|
\leq \Delta$, and by using \(\Delta=\frac{2\pi}{2^q}\), reduces to
\begin{equation}\label{32}
\small
\left|\mu_n-\mu_n^{(L)}\right|
\leq
\frac{2}{K+1}
\sum_{\ell=L+1}^{\infty}
\frac{I_\ell(\kappa)}{I_0(\kappa)}.
\end{equation}
In addition, by using the series representation of the modified Bessel function of the first kind, we have
\begin{equation}
\small
I_\ell(\kappa)
=
\sum_{r=0}^{\infty}
\frac{1}{r!\,\Gamma(r+\ell+1)}
\left(\frac{\kappa}{2}\right)^{2r+\ell},
\end{equation}
and since \(\Gamma(r+\ell+1)\geq \ell!\,r!\), it follows that
\begin{equation}\label{34}
\small
\begin{split}
I_\ell(\kappa)
&\leq
\frac{1}{\ell!}\left(\frac{\kappa}{2}\right)^\ell
\sum_{r=0}^{\infty}
\frac{1}{(r!)^2}
\left(\frac{\kappa}{2}\right)^{2r}.
\end{split}
\end{equation}
Finally, considering that $I_0(\kappa)
=
\sum_{r=0}^{\infty}
\frac{1}{(r!)^2}
\left(\frac{\kappa}{2}\right)^{2r}$, \eqref{34} can be rewritten as
\begin{equation}
\small
\frac{I_\ell(\kappa)}{I_0(\kappa)}
\leq
\frac{1}{\ell!}\left(\frac{\kappa}{2}\right)^\ell.
\end{equation}
Thus, by substituting the above inequality into \eqref{32} and recognizing that
\begin{equation}
\small
\sum_{\ell=L+1}^{\infty}
\frac{1}{\ell!}\left(\frac{\kappa}{2}\right)^\ell
=
\exp\!\left(\frac{\kappa}{2}\right)
-
\sum_{\ell=0}^{L}
\frac{1}{\ell!}\left(\frac{\kappa}{2}\right)^\ell,
\end{equation}
we obtain \eqref{eq:trunc_bound_exp}, which concludes the proof.
\end{IEEEproof}

Moreover, the following corollary specializes Proposition~2 to the Rayleigh (non-LoS) case, corresponding to \(K=0\) and \(\kappa=0\), for which the residual phase error becomes uniformly distributed over the quantization interval.
\begin{corollary}\label{corol2}
For the special case \(K=0\) and \(\kappa=0\), the residual phase error \(\varepsilon\) is uniformly distributed over \([-\Delta/2,\Delta/2]\), and its \(n\)-th circular moment reduces to $\mu_n = \sin\!\left(\frac{n\Delta}{2}\right) / \left(\frac{n\Delta}{2} \right)$.
\end{corollary}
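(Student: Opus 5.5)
The plan is to obtain the result directly from Corollary~1 rather than from the series in \eqref{eq:nth_moment}. This avoids any handling of the Bessel series or of the Dirac component.

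\textbf{Step 1 (distribution).} The first claim, that $\varepsilon$ is uniform on $[-\Delta/2,\Delta/2]$, is exactly Corollary~1. It follows by setting $K=0$ and $\kappa=0$ in \eqref{error_distribution}. The Dirac term then vanishes, and every exponential equals one since $I_0(0)=1$. The sum of $2^q$ unit terms, divided by $2\pi$, gives $1/\Delta$.

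\textbf{Step 2 (moment).} I will evaluate $\mu_n=\mathbb{E}[e^{jn\varepsilon}]$ against the uniform density:
\begin{equation*}
\small
\mu_n=\frac{1}{\Delta}\int_{-\Delta/2}^{\Delta/2}e^{jn\varepsilon}\,d\varepsilon
=\frac{1}{\Delta}\,\frac{2\sin\!\left(\frac{n\Delta}{2}\right)}{n}
=\frac{\sin\!\left(\frac{n\Delta}{2}\right)}{\frac{n\Delta}{2}},
\end{equation*}
using $e^{jx}-e^{-jx}=2j\sin(x)$, as in the proof of Proposition~2. The result is real because the imaginary part of the integrand is odd over a symmetric interval.

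\textbf{Step 3 (consistency check).} As a cross-check, I will specialize \eqref{eq:nth_moment} itself. Since $I_\ell(0)=0$ for $\ell\ge1$, the infinite sum drops out. The factor $K/(K+1)$ vanishes when $K=0$. What remains is $\frac{1}{2\pi}\cdot 2^q\cdot \frac{2\sin(n\Delta/2)}{n}$, and substituting $2^q/(2\pi)=1/\Delta$ gives the same expression.

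There is no real obstacle here. The only points needing care are noting that $I_\ell(0)=0$ for $\ell\geq 1$ and applying the identity $2^q/(2\pi)=1/\Delta$ correctly.
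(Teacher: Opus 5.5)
Your proposal is correct and follows the paper's proof: you reduce the PDF to $1/\Delta$, integrate $e^{jn\varepsilon}$ over $[-\Delta/2,\Delta/2]$, and apply $e^{jx}-e^{-jx}=2j\sin(x)$. Your Step~3 is an extra check that the paper does not include: specializing the general moment expression using $I_\ell(0)=0$ for $\ell\geq 1$ and $2^q/(2\pi)=1/\Delta$ gives the same result, so it is valid and consistent with Proposition~2.
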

\begin{IEEEproof}
For \(K=0\) and \(\kappa=0\), the PDF in \eqref{error_distribution} reduces to
\begin{equation}
\small
f_{\varepsilon}(\varepsilon)=\frac{1}{\Delta},\qquad \varepsilon \in\left[-\frac{\Delta}{2},\; \frac{\Delta}{2} \right].
\end{equation}
Hence, the $n$-th circular moment can be mathematically expressed as
\begin{equation}
\small
\mu_n=\frac{1}{\Delta}\int_{-\Delta/2}^{\Delta/2} e^{jn\varepsilon}\,d\varepsilon=\frac{1}{\Delta}\frac{e^{jn\Delta/2}-e^{-jn\Delta/2}}{jn}.
\end{equation}
Finally, by using the identity \(e^{jx}-e^{-jx}=2j\sin(x)\), we obtain $\mu_n=\frac{\sin\!\left(\frac{n\Delta}{2}\right)}{\frac{n\Delta}{2}}$, which completes the proof.
\end{IEEEproof}
\begin{remark}
In contrast to~\cite{badiu2020communication}, where the circular moments presented in Corollary \ref{corol2} are considered applicable independently of the channel's phase distribution, they are strictly valid only when the channel's phase is uniformly distributed, which is consistent only with pure non-LoS propagation conditions. As a result, they cannot accurately capture scenarios with fading channels that include LoS components.
\end{remark}

\section{Performance Analysis}

Building on the statistical characterization of the residual phase error, we now proceed to evaluate the performance of the considered zeRIS-assisted network in terms of the joint energy-data rate outage probability. In particular, the derived circular moments provide a tractable way to capture the effect of phase quantization, thereby enabling the analytical treatment of the outage performance under the considered HaR schemes. Finally, since the structure of the cascaded channel depends on the zeRIS placement, the analysis is first presented for the Tx-side deployment and subsequently extended to the UE-side case.

\subsection{Tx-side Deployment}

In the Tx-side deployment, the zeRIS is placed in close proximity to the transmitter, so that the Tx-zeRIS link is considered as a pure LoS link. As a result, the corresponding channel exhibits deterministic behavior, with $|h_{1i}|=1$ and $\arg(h_{1i})=\frac{2\pi d_1}{\lambda}$. In contrast, the zeRIS-UE link is subject to small-scale fading whose amplitude $|h_{2i}|$ follows a Nakagami-$m$ distribution, while its phase is random and incorporated into the residual phase error model, where in this case, the distance $d_u=d_2$. Therefore, under the Tx-side deployment, the randomness of the cascaded channel is entirely characterized by the zeRIS-UE link, while the Tx-zeRIS link contributes only a deterministic phase shift. In this direction, we derive a closed-form approximation for the joint energy–data rate outage probability of a zeRIS with quantized phase shifts under the Tx-side deployment with the TS scheme.

\begin{proposition}\label{prop:4}
Under the Tx-side deployment and the TS scheme, the joint energy–data rate outage probability of a zeRIS-assisted network can be accurately approximated as
\begin{equation}\label{PBTS}
\small
P_{\mathrm{B}}^{\mathrm{TS}}
=
\begin{cases}
1, & \text{if } N \leq N_{\min}^{\mathrm{TS}}, \\
\frac{\gamma\!\left(k_{\mathrm{TS}}, \frac{x^B_{\mathrm{TS}}}{\theta_{\mathrm{TS}}}\right)}{\Gamma\!\left(k_{\mathrm{TS}}\right)}, & \text{otherwise},
\end{cases}
\end{equation}
where $x^B_{\mathrm{TS}} = \frac{2^{\frac{R_{\mathrm{thr}}}{1-\tau}} - 1}{\gamma_t G \ell_p}$, \(\Gamma(\cdot)\) denotes the Gamma function, \(\gamma(\cdot,\cdot)\) denotes the lower incomplete Gamma function, $N_{\min}^{\mathrm{TS}}=
\frac{(1-\tau)P_{\mathrm{elem}}+\sqrt{(1-\tau)^2P_{\mathrm{elem}}^2+4\tau\zeta P_t G_t \ell_1 P_{\mathrm{ctrl}}}}{2\tau\zeta P_t G_t \ell_1}$, $k_{\mathrm{TS}}=
\frac{\mathbb{E}^2[X;N,d_2]}{\mathbb{E}[X^2;N,d_2]-\mathbb{E}^2[X;N,d_2]}$, and $\theta_{\mathrm{TS}}=
\frac{\mathbb{E}[X^2;N,d_2]-\mathbb{E}^2[X;N,d_2]}{\mathbb{E}[X;N,d_2]}$, where $\mathbb{E}[X;N,d_2]$ and $\mathbb{E}[X^2;N,d_2]$ are given in \eqref{eq:EX_TS} and \eqref{eq:EX2_TS}, respectively.
\end{proposition}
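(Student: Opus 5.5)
The plan is to split the outage event into an energy part and a rate part and treat them separately. In the Tx-side deployment the Tx-zeRIS link is deterministic, with $|h_{1i}|=1$ and a common phase $\frac{2\pi d_1}{\lambda}$. The harvesting configuration $\omega_i^{\mathrm{h}}=\mathbb{Q}(-\arg(h_{1i}))$ therefore leaves the same deterministic residual error on every element. Under the standard idealization that energy alignment with the LoS link is coherent, the harvested energy becomes deterministic: $Q_{\mathrm{TS}}=\tau T\zeta P_tG_t\ell_1 N^2$, or $N^2$ multiplied by a deterministic factor $|e^{j\varepsilon}|^2=1$, since a common phase error cancels in the modulus.

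\textbf{Energy part.} The energy-outage event $Q_{\mathrm{TS}}<E_{\mathrm{TS}}$ is then deterministic and reads
\begin{equation*}
\tau\zeta P_tG_t\ell_1N^2-(1-\tau)P_{\mathrm{elem}}N-P_{\mathrm{ctrl}}<0 .
\end{equation*}
The left-hand side is a quadratic in $N$ with positive leading coefficient and negative constant term, so it has exactly one positive root, namely $N_{\min}^{\mathrm{TS}}$. For $N\le N_{\min}^{\mathrm{TS}}$ the zeRIS cannot sustain itself, which gives the first branch $P_{\mathrm{B}}^{\mathrm{TS}}=1$.

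\textbf{Rate part.} For $N>N_{\min}^{\mathrm{TS}}$ only the rate outage remains. Define
\begin{equation*}
X=\Big|\sum_{i=1}^N|h_{2i}|e^{j\varepsilon_i}\Big|^2 .
\end{equation*}
Inverting $R_{\mathrm{TS}}<R_{\mathrm{thr}}$ gives $P_{\mathrm{B}}^{\mathrm{TS}}=\Pr(X<x^B_{\mathrm{TS}})$. I will approximate $X$ by a Gamma variable through moment matching, so the outage equals the Gamma CDF $\gamma(k,x/\theta)/\Gamma(k)$ with $k=\mathbb{E}^2[X]/\mathrm{Var}[X]$ and $\theta=\mathrm{Var}[X]/\mathbb{E}[X]$.

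\textbf{Computing the moments.} The $\varepsilon_i$ are the residual errors of the cascaded alignment. Since the LoS phase is deterministic, their statistics are those of Proposition 1 with $d_u=d_2$. I assume the $\varepsilon_i$ are i.i.d. and independent of the amplitudes $|h_{2i}|$. Expanding the first moment gives
\begin{equation*}
\mathbb{E}[X]=N\,\mathbb{E}[|h|^2]+N(N-1)\,\mathbb{E}^2[|h|]\,|\mu_1(d_2)|^2 ,
\end{equation*}
where $\mathbb{E}[|h|^p]=\frac{\Gamma(m_n+p/2)}{\Gamma(m_n)}(\Omega/m_n)^{p/2}$ for Nakagami-$m$, and $\mu_1$ comes from Proposition 2.

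For $\mathbb{E}[X^2]$, I write $X=\sum_{i,k,l,r}a_ia_ka_la_r e^{j(\varepsilon_i-\varepsilon_k+\varepsilon_l-\varepsilon_r)}$ and group the index quadruples by their coincidence pattern: all four indices equal, the various pairwise equalities, three equal, and all distinct. Each group yields amplitude moments $\mathbb{E}[|h|^p]$ for $p\le4$, multiplied by products of circular moments $\mu_1$, $\mu_2$ and their conjugates. Typical factors are $|\mu_1|^4$, $|\mu_1|^2$, $\mu_2\mu_1^{*2}$, and $|\mu_2|^2$, each with combinatorial counts such as $N(N-1)(N-2)(N-3)$.

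\textbf{Main obstacle.} The hardest step is this bookkeeping for $\mathbb{E}[X^2]$. The circular moments are complex when $\kappa,K>0$, so conjugate pairings must be tracked carefully rather than assuming real $\mu_n=\mu_{-n}$. I would handle it by first writing $\sum_i a_ie^{j\varepsilon_i}=U+jV$ and expanding $\mathbb{E}[(U^2+V^2)^2]$. I would then sanity-check the result against the uniform case of Corollary 2, where the moments are real and known expressions from the quantized-RIS literature should be recovered.

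Finally, the accuracy claim of the Gamma approximation rests on the central-limit-type behaviour of $X$ for moderate $N$. I would justify it by matching the first two moments exactly and confirming numerically, with Proposition 3 guaranteeing that the truncated series used for $\mu_1,\mu_2$ converge.
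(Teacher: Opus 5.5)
Your proposal is correct and follows the paper's own proof in Appendix~A essentially step for step. The common residual phase on the LoS harvesting link makes the harvested energy deterministic with factor $N^2$, and the positive root of the resulting quadratic is $N_{\min}^{\mathrm{TS}}$. For $N>N_{\min}^{\mathrm{TS}}$ the outage reduces to $\Pr(X\le x^B_{\mathrm{TS}})$, which is Gamma-matched using moments obtained by grouping the index patterns of $\sum X_iX_k^{*}X_pX_q^{*}$.

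There is one place where your bookkeeping, done carefully, will not reproduce the displayed formula: the $3+1$ pattern. There $X_iX_i^{*}X_iX_k^{*}=|h_{2i}|^3|h_{2k}|e^{j(\varepsilon_i-\varepsilon_k)}$ has mean $\mathbb{E}[|h_2|^3]\,\mathbb{E}[|h_2|]\,|\mu_1(d_2)|^2$. The paper instead factorizes it as $\mathbb{E}[|h_2|^2]\,\mathbb{E}[|h_2|]^2|\mu_1(d_2)|^2$ in \eqref{eq:EX2_TS}. The $q\to\infty$ limit of your uniform-case check, where $X^2=(\sum_i|h_{2i}|)^4$, would expose this. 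What you will actually establish is \eqref{PBTS} with the coefficient of $4N(N-1)|\mu_1(d_2)|^2$ replaced by $\frac{\Gamma(m_n+3/2)\Gamma(m_n+1/2)}{\Gamma^2(m_n)}\left(\frac{\Omega}{m_n}\right)^2$. This correction is of lower order in $N$ than the dominant variance terms, so it is numerically minor.
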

\begin{figure*}[t!]

\begin{equation}\label{eq:EX_TS}
\small
\mathbb{E}[X;N,d_u]
=
N\frac{\Gamma(m_n+1)}{\Gamma(m_n)}\left(\frac{\Omega}{m_n}\right)
+
N(N-1)
\left(
\frac{\Gamma\!\left(m_n+\frac{1}{2}\right)}{\Gamma(m_n)}
\left(\frac{\Omega}{m_n}\right)^{\frac{1}{2}}
\right)^2
|\mu_1(d_u)|^2
\end{equation}
\hrule

\begin{equation}\label{eq:EX2_TS}
\small
\begin{split}
&\mathbb{E}[X^2;N,d_u]
=
N\frac{\Gamma(m_n+2)}{\Gamma(m_n)}\left(\frac{\Omega}{m_n}\right)^2 +
4N(N-1)
\frac{\Gamma(m_n+1)}{\Gamma(m_n)}\left(\frac{\Omega}{m_n}\right)
\left(
\frac{\Gamma\!\left(m_n+\frac{1}{2}\right)}{\Gamma(m_n)}
\left(\frac{\Omega}{m_n}\right)^{\frac{1}{2}}
\right)^2
|\mu_1(d_u)|^2 \\
&\quad +
N(N-1)
\left(
\frac{\Gamma(m_n+1)}{\Gamma(m_n)}\left(\frac{\Omega}{m_n}\right)
\right)^2
|\mu_2(d_u)|^2 +
2N(N-1)
\left(
\frac{\Gamma(m_n+1)}{\Gamma(m_n)}\left(\frac{\Omega}{m_n}\right)
\right)^2 \\
&\quad +
N(N-1)(N-2)
\frac{\Gamma(m_n+1)}{\Gamma(m_n)}\left(\frac{\Omega}{m_n}\right)
\left(
\frac{\Gamma\!\left(m_n+\frac{1}{2}\right)}{\Gamma(m_n)}
\left(\frac{\Omega}{m_n}\right)^{\frac{1}{2}}
\right)^2  
\Big(2\mathrm{Re}\!\{\mu_2(d_u)(\mu_1^{*}(d_u))^2\}+4|\mu_1(d_u)|^2\Big) \\
&\quad +
N(N-1)(N-2)(N-3)
\left(
\frac{\Gamma\!\left(m_n+\frac{1}{2}\right)}{\Gamma(m_n)}
\left(\frac{\Omega}{m_n}\right)^{\frac{1}{2}}
\right)^4
|\mu_1(d_u)|^4
\end{split}
\end{equation}

\hrule
\end{figure*}
\begin{IEEEproof}
 The proof can be found in Appendix A.
\end{IEEEproof}
\begin{remark}
Since for \(N > N_{\min}^{\mathrm{TS}}\) the joint energy–data rate outage probability reduces to the rate outage probability and \(x^B_{\mathrm{TS}}\) increases in \(\tau\), the outage probability monotonically increases with \(\tau\), and thus the optimal time-switching factor $\tau_{\mathrm{B}}^{*}$ is obtained at the boundary of the energy feasibility constraint, yielding
\begin{equation}
\small
\tau_{\mathrm{B}}^{*} = \frac{N P_{\mathrm{elem}} + P_{\mathrm{ctrl}}}{N P_{\mathrm{elem}} + N^2 \zeta P_t G_t \ell_1}.
\end{equation}
\end{remark}

Next, we derive the joint energy–data rate outage probability for the case where a Tx-side zeRIS harvests energy through the ES scheme.
\begin{proposition}
The joint energy–data rate outage probability for a Tx-side zeRIS that applies the ES scheme can be accurately approximated as
\begin{equation}\label{PBES}
\small
P^{\mathrm{ES}}_{\mathrm{B}} =
\begin{cases}
1, & N_1 \leq N_{\min}^{\mathrm{ES}}, \\
\dfrac{\gamma\!\left(k_{\mathrm{ES}}, \frac{x^B_{\mathrm{ES}}}{\theta_{\mathrm{ES}}}\right)}{\Gamma\!\left(k_{\mathrm{ES}}\right)}, & \text{otherwise},
\end{cases}
\end{equation}
where $x^B_{\mathrm{ES}}=\frac{2^{R_{\mathrm{thr}}}-1}{\gamma_t G \ell_1 \ell_2}$, $N_{\min}^{\mathrm{ES}}=\sqrt{\frac{N_2 P_{\mathrm{elem}}+P_{\mathrm{ctrl}}}{\zeta P_t G_t \ell_1}}$, $k_{\mathrm{ES}}= \frac{\mathbb{E}_{[X]}^2(N_2,d_2)}{\mathbb{E}_{[X^2]}(N_2,d_2)-\mathbb{E}_{[X]}^2(N_2,d_2)}$, and $\theta_{\mathrm{ES}}=
\frac{\mathbb{E}_{[X^2]}(N_2,d_2)-\mathbb{E}_{[X]}^2(N_2,d_2)}{\mathbb{E}_{[X]}(N_2,d_2)}$.
\end{proposition}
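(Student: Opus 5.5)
The plan mirrors the argument behind Proposition~\ref{prop:4}. First, define the joint outage event as the union of an energy outage $\{Q_{\mathrm{ES}}<E_{\mathrm{ES}}\}$ and a rate outage $\{R_{\mathrm{ES}}<R_{\mathrm{thr}}\}$. I will then show that, under the Tx-side deployment, the first event is deterministic.

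In the Tx-side case $|h_{1i}|=1$ and $\arg(h_{1i})=\frac{2\pi d_1}{\lambda}$ for every element, so all $N_1$ harvesting elements see the same deterministic phase. The quantized harvesting phase $\omega_i^{\mathrm{h}}=\mathbb{Q}(-\arg(h_{1i}))$ is therefore identical across elements. The summands in $Q_{\mathrm{ES}}$ then add coherently, with a common residual phase that has unit modulus, which gives $Q_{\mathrm{ES}}=T\zeta P_tG_t\ell_1 N_1^2$. The energy constraint $Q_{\mathrm{ES}}\ge E_{\mathrm{ES}}$ becomes $N_1^2\zeta P_tG_t\ell_1\ge N_2P_{\mathrm{elem}}+P_{\mathrm{ctrl}}$, i.e., $N_1\ge N_{\min}^{\mathrm{ES}}$. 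When it fails, the zeRIS cannot sustain operation and $P_{\mathrm{B}}^{\mathrm{ES}}=1$; I will follow the paper's boundary convention, with equality counted as outage. Otherwise the joint outage reduces to the rate outage, $\Pr\{X<x^B_{\mathrm{ES}}\}$, where
\begin{equation*}
X=\Big|\sum_{i=1}^{N_2}|h_{2i}|e^{j\varepsilon_i}\Big|^2 .
\end{equation*}
Here $\varepsilon_i$ is the residual phase error of the cascaded alignment. Because the Tx-zeRIS phase is deterministic, $\varepsilon_i$ is driven by the phase of $h_{2i}$ and follows Proposition~1 with $d_u=d_2$. Unlike TS, the ES rate has no prelog factor $(1-\tau)$, which explains the form of $x^B_{\mathrm{ES}}$.

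Next I approximate the distribution of $X$ by a Gamma distribution via moment matching, as in Appendix A. I will reuse \eqref{eq:EX_TS} and \eqref{eq:EX2_TS} with $N$ replaced by $N_2$ and $d_u=d_2$. The derivation expands $X=\sum_{i,k}a_ia_ke^{j(\varepsilon_i-\varepsilon_k)}$ with independent Nakagami amplitudes, whose moments are $\mathbb{E}[a^p]=\frac{\Gamma(m_n+p/2)}{\Gamma(m_n)}(\Omega/m_n)^{p/2}$. For $i\neq k$, independence of the residual errors gives $\mathbb{E}[e^{j(\varepsilon_i-\varepsilon_k)}]=|\mu_1(d_2)|^2$, with $\mu_n$ taken from Proposition~2. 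For $\mathbb{E}[X^2]$, I will expand the fourfold sum and group index patterns by coincidence counts, which yields the $N_2$, $N_2(N_2-1)$, $N_2(N_2-1)(N_2-2)$ and $N_2(N_2-1)(N_2-2)(N_2-3)$ terms involving $\mu_1$ and $\mu_2$. Matching $k_{\mathrm{ES}}=\mathbb{E}^2[X]/\mathrm{Var}[X]$ and $\theta_{\mathrm{ES}}=\mathrm{Var}[X]/\mathbb{E}[X]$ then gives the CDF $\gamma(k_{\mathrm{ES}},x/\theta_{\mathrm{ES}})/\Gamma(k_{\mathrm{ES}})$, evaluated at $x^B_{\mathrm{ES}}$.

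The main obstacle is the second-moment bookkeeping. Each coincidence class (such as $i=k\neq l=m$, or $i=l\neq k=m$) must be counted with the correct multiplicity and matched to the correct circular-moment product, for example $\mathrm{Re}\{\mu_2(\mu_1^*)^2\}$. Since this is already carried out in Appendix A for a generic array size, I would invoke it directly with $N\to N_2$. That leaves the deterministic energy step and the justification of the Gamma approximation, which is accurate for moderate $N_2$ by central-limit-type arguments, as the only new ingredients.
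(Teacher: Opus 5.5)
Your proposal is correct and follows the paper's proof essentially step for step. LoS coherence makes the harvested energy the deterministic quantity $T\zeta P_tG_t\ell_1N_1^2$, so the energy outage reduces to the condition $N_1\leq N_{\min}^{\mathrm{ES}}$; otherwise the rate outage $\Pr(X\leq x^B_{\mathrm{ES}})$ is handled by the Gamma moment matching of Appendix~A with $N\to N_2$ and $d_u=d_2$.

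One caution, since you plan to import Appendix~A verbatim. There, the $3+1$ term $X_iX_i^{*}X_iX_k^{*}=|h_{2i}|^3|h_{2k}|e^{j(\varepsilon_i-\varepsilon_k)}$ is assigned the mean $\mathbb{E}[|h_2|^2]\mathbb{E}[|h_2|]^2|\mu_1(d_2)|^2$, but independence actually gives $\mathbb{E}[|h_2|^3]\mathbb{E}[|h_2|]|\mu_1(d_2)|^2$. You can confirm this by setting $\varepsilon_i\equiv 0$ and comparing with the multinomial coefficient $4N(N-1)\mathbb{E}[|h_2|^3]\mathbb{E}[|h_2|]$ in $\mathbb{E}[(\sum_i|h_{2i}|)^4]$. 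Carrying out your own coincidence-class count is therefore the safer route.
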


\begin{IEEEproof}
Considering the definition of the joint energy-data rate outage probability, it follows that for the Tx-side deployment and the ES scheme it can be expressed as
\begin{equation}\label{PGEN_ES_BS}
\small
\begin{split}
&P_{\mathrm{B}}^{\mathrm{ES}}
= \Pr \Bigg( T \zeta P_t G_t \ell_1
\left|\sum_{i=1}^{N_1} e^{j\left(\omega_i^{\mathrm{h}}+\arg(h_{1i})\right)}\right|^2 \\
&\leq T\big(N_2P_{\mathrm{elem}}+P_{\mathrm{ctrl}}\big) \\
&\cup \ \log_2\!\left(1+\gamma_t G \ell_1 \ell_2
\left|\sum_{i=1}^{N_2}|h_{2i}|e^{j\varepsilon_i}\right|^2\right) \leq R_{\mathrm{thr}} \Bigg).
\end{split}
\end{equation}
Similarly to the TS scheme, for the Tx-side deployment the Tx-zeRIS link is purely LoS, and thus \(\arg(h_{1i})=\frac{2\pi d_1}{\lambda}\) for all zeRIS elements. As a result, during the energy harvesting phase, all \(N_1\) harvesting elements apply the same quantized phase shift, which leads to a common residual phase and preserves full coherence. Therefore, quantization does not affect the absorption process, and the magnitude $\left|\sum_{i=1}^{N_1} e^{j\left(\omega_i^{\mathrm{h}}+\arg(h_{1i})\right)}\right|^2$ becomes equal to $N_1^2$. Thus, the harvested energy becomes independent of the residual phase error, and the energy outage event can be equivalently written as  \(P_{\mathrm{B}}^{\mathrm{ES}}=1\) when \(N_1 \leq N_{\min}^{\mathrm{ES}}\).

For the case \(N_1 > N_{\min}^{\mathrm{ES}}\), the outage event is determined only by the rate condition. In particular, by defining $X=\left|\sum_{i=1}^{N_2}|h_{2i}|e^{j\varepsilon_i}\right|^2$, the rate outage event can be rewritten as 
\begin{equation}
\small
P_{\mathrm{B}}^{\mathrm{ES}}=\Pr\!\left(X \leq x^B_{\mathrm{ES}}\right).
\end{equation}
Then, by following the same moment matching procedure as in the TS scheme and using the moments \(\mathbb{E}_{[X]}(N_2,d_2)\) and \(\mathbb{E}_{[X^2]}(N_2,d_2)\), the parameters \(k_{\mathrm{ES}}\) and \(\theta_{\mathrm{ES}}\) can be obtained, which directly yields \eqref{PBES} and concludes the proof.
\end{IEEEproof}
\begin{remark}
Since for \(N_1 > N_{\min}^{\mathrm{ES}}\) the outage probability is dictated by the rate term which decreases with \(N_1\), the optimal partition is obtained at the boundary of the energy feasibility constraint, yielding
\begin{equation}
\small
N_{1,\mathrm{B}}^{*} = \frac{-P_{\mathrm{elem}} + \sqrt{P_{\mathrm{elem}}^2 + 4\zeta P_t G_t \ell_1 \left( N P_{\mathrm{elem}} + P_{\mathrm{ctrl}}\right)}}{2 \zeta P_t G_t \ell_1}.
\end{equation}
\end{remark}
\begin{remark}
The derived expressions for the joint energy-data rate outage probability can be directly utilized to characterize the energy efficiency of the zeRIS-assisted system, as they capture the successful transmission behavior under explicit energy constraints, in line with the definition adopted in~\cite{zeris}.
\end{remark}

\subsection{UE-side Deployment}

In the UE-side deployment, the zeRIS is placed in close proximity to the UE so that the zeRIS-UE link is considered as a pure LoS link. As a result, the corresponding channel exhibits deterministic behavior, with $|h_{2i}|=1$ and $\arg(h_{2i})=\frac{2\pi d_2}{\lambda}$. In contrast, the Tx-zeRIS link is subject to small-scale fading whose amplitude $|h_{1i}|$ follows a Nakagami-$m$ distribution, while its phase is random and incorporated into the residual phase error model, where in this case $d_u=d_1$. Therefore, under the UE-side deployment, the randomness of the cascaded channel is entirely characterized by the Tx-zeRIS link, while the zeRIS-UE link contributes only a deterministic phase shift. In this direction, we derive a closed-form expression for the joint energy–data rate outage probability of a zeRIS with quantized phase shifts under the UE-side deployment with the TS scheme.
\begin{proposition}
Under the UE-side deployment and the TS scheme, the joint energy–data rate outage probability of a zeRIS-assisted network can be accurately approximated as
\begin{equation}\label{PUTS}
\small
P_{\mathrm{U}}^{\mathrm{TS}}
=
\frac{\gamma\!\left(k_{\mathrm{TS}}, \frac{x^U_{\mathrm{TS}}}{\theta_{\mathrm{TS}}}\right)}{\Gamma\!\left(k_{\mathrm{TS}}\right)},
\end{equation}
where $x^U_{\mathrm{TS}}=\max \left\{
\frac{2^{\frac{R_{\mathrm{thr}}}{1-\tau}} - 1}{\gamma_t G \ell_1 \ell_2},
\frac{N P_{\mathrm{elem}}(1-\tau) + P_{\mathrm{ctrl}}}{\tau \zeta P_t G_t \ell_1}
\right\}$, $k_{\mathrm{TS}}= \frac{\mathbb{E}^2[X;N,d_1]}{\mathbb{E}[X^2;N,d_1]-\mathbb{E}^2[X;N,d_1]}$, and $\theta_{\mathrm{TS}}= \frac{\mathbb{E}[X^2;N,d_1]-\mathbb{E}^2[X;N,d_1]}{\mathbb{E}[X;N,d_1]}$.
\end{proposition}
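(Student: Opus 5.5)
Starting from the definition of the joint energy–data rate outage probability, under the UE-side deployment with the TS scheme I would write
\[
P_{\mathrm{U}}^{\mathrm{TS}}=\Pr\Big(Q_{\mathrm{TS}}\leq E_{\mathrm{TS}} \ \cup\ R_{\mathrm{TS}}\leq R_{\mathrm{thr}}\Big),
\]
and substitute $|h_{2i}|=1$ and a common $\arg(h_{2i})=\frac{2\pi d_2}{\lambda}$. The key structural point is that both events now depend on the same random quantity. In the harvesting phase the phases are $\omega_i^{\mathrm{h}}=\mathbb{Q}(-\arg(h_{1i}))$, so the $i$-th term equals $|h_{1i}|e^{j\varepsilon_i}$, where $\varepsilon_i$ is the residual error of quantizing the fading phase $-\arg(h_{1i})$. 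In the reflection phase, $\omega_i^{\mathrm{c}}=\mathbb{Q}(-\arg(h_{1i})-\tfrac{2\pi d_2}{\lambda})$, so the $i$-th term is again $|h_{1i}|e^{j\varepsilon_i'}$. The shift by the common deterministic constant $\frac{2\pi d_2}{\lambda}$ only rotates the quantization grid relative to the phase distribution. Under the adopted model I would treat the resulting residual errors as i.i.d. with PDF given by Proposition~1 with $d_u=d_1$. Their circular moments are then $\mu_n(d_1)$ from Proposition~2; in the uniform-phase case they do not depend on the offset at all. Hence both $Q_{\mathrm{TS}}$ and the SNR are scaled versions of $X=\big|\sum_{i=1}^{N}|h_{1i}|e^{j\varepsilon_i}\big|^2$.

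The two events then become $X\leq \frac{NP_{\mathrm{elem}}(1-\tau)+P_{\mathrm{ctrl}}}{\tau\zeta P_tG_t\ell_1}$ (dividing by $T$) and $X\leq \frac{2^{R_{\mathrm{thr}}/(1-\tau)}-1}{\gamma_t G\ell_1\ell_2}$. The union of two lower-threshold events on the same variable is a single event with the larger threshold, which gives $P_{\mathrm{U}}^{\mathrm{TS}}=\Pr(X\leq x^U_{\mathrm{TS}})$. Unlike the Tx-side case, no deterministic feasibility regime appears, because the harvested energy is itself random.

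The remaining task is to approximate the distribution of $X$. I would reuse the moment-matching step of Appendix~A (Proposition~\ref{prop:4}) and fit a Gamma distribution with shape $k_{\mathrm{TS}}=\mathbb{E}^2[X]/(\mathbb{E}[X^2]-\mathbb{E}^2[X])$ and scale $\theta_{\mathrm{TS}}=(\mathbb{E}[X^2]-\mathbb{E}^2[X])/\mathbb{E}[X]$. Its CDF evaluated at $x^U_{\mathrm{TS}}$ is exactly \eqref{PUTS}. The moments are \eqref{eq:EX_TS} and \eqref{eq:EX2_TS} with $d_u=d_1$, since $X$ has the same structure as in the Tx-side case, with $|h_{1i}|$ Nakagami-$m$ replacing $|h_{2i}|$. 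To obtain them, I would expand $|\sum_i a_ie^{j\varepsilon_i}|^2$ and its square into sums over index tuples. Independence then factors each term into Nakagami moments $\mathbb{E}[a^k]=\frac{\Gamma(m_n+k/2)}{\Gamma(m_n)}(\Omega/m_n)^{k/2}$ and circular moments $\mu_1,\mu_2$.

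The main obstacle is justifying that a single $X$ governs both phases. The harvesting and reflection quantization errors are not literally equal: they are errors of quantizing the same random phase on two grids offset by $\frac{2\pi d_2}{\lambda}$. When $\kappa>0$ or $K>0$, their distributions, and so $\mu_n$, can differ slightly. I would handle this in one of two ways. One option is to note that $\frac{2\pi d_2}{\lambda}$ modulo $\Delta$ can be absorbed by the paper's convention of characterizing everything through $\mu_n(d_1)$. The other is to argue that the approximation is accurate because both sums share the amplitudes $|h_{1i}|$, which dominate the randomness. Either way, I would state explicitly that this identification is part of the approximation.
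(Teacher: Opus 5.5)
Your proposal follows the paper's proof step for step. The LoS zeRIS--UE link lets both the energy event and the rate event be written through the single variable $X=\big|\sum_{i=1}^{N}|h_{1i}|e^{j\varepsilon_i}\big|^2$, the union of the two lower-threshold events collapses to $X\le x^U_{\mathrm{TS}}$, and Gamma moment matching with $\mathbb{E}[X;N,d_1]$ and $\mathbb{E}[X^2;N,d_1]$ yields \eqref{PUTS}. Your caveat is also well taken: the harvesting and reflection residual errors come from quantizing the same phase on grids offset by $2\pi d_2/\lambda$, so they coincide only when that offset is a multiple of $\Delta$, whereas the paper simply asserts ``identical statistics'' and treats them as one random variable, so stating this identification explicitly as part of the approximation is the honest presentation of the same argument.
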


\begin{IEEEproof}
Considering the definition of the joint energy-data rate outage probability, it follows that for the UE-side deployment and the TS scheme it can be expressed as
\begin{equation}\label{PGEN_TS_UE}
\small
\begin{split}
&P_{\mathrm{U}}^{\mathrm{TS}} 
=\Pr \!\Bigg( \tau T \zeta P_t G_t \ell_1 
\left|\sum_{i=1}^{N} |h_{1i}| e^{j\left(\omega_i^{\mathrm{h}}+\arg(h_{1i})\right)}\right|^2 \\
&\leq T\big((1-\tau)NP_{\mathrm{elem}}+P_{\mathrm{ctrl}}\big) \cup \!(1-\tau)\\& \times\log_2\!\left(\!1\!+\!\gamma_t G \ell_p 
\left|\sum_{i=1}^{N}|h_{1i}| e^{j\left(\omega_i^{\mathrm{c}}+\arg(h_{1i})+\arg(h_{2i})\right)}\right|^2\right) \!\!\leq \!R_{\mathrm{thr}}\!\! \Bigg).
\end{split}
\end{equation}
Since the zeRIS-UE link is purely LoS, \(\arg(h_{2i})=\frac{2\pi d_2}{\lambda}\) is deterministic for all elements, and thus the corresponding residual phase errors have identical statistics, which implies that both the harvested energy and rate terms can be expressed through the same residual phase error RV \(\varepsilon_i\), yielding
\begin{equation}
\small
\begin{split}
P_{\mathrm{U}}^{\mathrm{TS}}
= \Pr \Bigg(
&\left|\sum_{i=1}^{N}|h_{1i}|e^{j\varepsilon_i}\right|^2
\leq
\frac{N P_{\mathrm{elem}}(1-\tau) + P_{\mathrm{ctrl}}}{\tau \zeta P_t G_t \ell_1}
\\
&\cup
\left|\sum_{i=1}^{N}|h_{1i}|e^{j\varepsilon_i}\right|^2
\leq
\frac{2^{\frac{R_{\mathrm{thr}}}{1-\tau}} - 1}{\gamma_t G \ell_p }
\Bigg),
\end{split}
\end{equation}
and since both events involve the same RV, their union reduces to a single inequality of the form \(P_{\mathrm{U}}^{\mathrm{TS}} = \Pr\!\left(\left|\sum_{i=1}^{N}|h_{1i}|e^{j\varepsilon_i}\right|^2 \leq x^U_{\mathrm{TS}}\right)\). Therefore, by approximating the RV \(X=\left|\sum_{i=1}^{N}|h_{1i}|e^{j\varepsilon_i}\right|^2\) via moment matching with a Gamma distribution using the moments \(\mathbb{E}[X;N,d_1]\) and \(\mathbb{E}[X^2;N,d_1]\), we directly obtain \eqref{PUTS}, which concludes the proof.
\end{IEEEproof}
\begin{remark}
Since the outage probability is determined by the maximum of the energy and rate thresholds, the optimal time-switching factor $\tau_{\mathrm{U}}^{*}$ is obtained by equating the two terms in $x^U_{\mathrm{TS}}$.
\end{remark}

Finally, we proceed to derive the joint energy–data rate outage probability for the case where a UE-side zeRIS harvests energy through the ES scheme.
\begin{proposition}
The joint energy–data rate outage probability for a UE-side zeRIS that applies the ES scheme can be accurately approximated as
\begin{equation}\label{PUES}
\small
P^{\mathrm{ES}}_{\mathrm{U}}
=
\frac{\gamma\!\left(k_1,\frac{x^U_{\mathrm{ES}}}{\theta_1}\right)}{\Gamma(k_1)}
+
\frac{\gamma\!\left(k_2,\frac{x^U_{\mathrm{R}}}{\theta_2}\right)}{\Gamma(k_2)}
-
\frac{\gamma\!\left(k_1,\frac{x^U_{\mathrm{ES}}}{\theta_1}\right)}{\Gamma(k_1)}
\frac{\gamma\!\left(k_2,\frac{x^U_{\mathrm{R}}}{\theta_2}\right)}{\Gamma(k_2)},
\end{equation}
where $x^U_{\mathrm{ES}}=\frac{N_2 P_{\mathrm{elem}}+P_{\mathrm{ctrl}}}{\zeta P_t G_t \ell_1}$, $x^U_{\mathrm{R}}=\frac{2^{R_{\mathrm{thr}}}-1}{\gamma_t G \ell_p}$, $k_1=\frac{\mathbb{E}_{[X]}^2(N_1,d_1)}{\mathbb{E}_{[X^2]}(N_1,d_1)-\mathbb{E}_{[X]}^2(N_1,d_1)}$, $\theta_1=\frac{\mathbb{E}_{[X^2]}(N_1,d_1)-\mathbb{E}_{[X]}^2(N_1,d_1)}{\mathbb{E}_{[X]}(N_1,d_1)}$, $k_2=\frac{\mathbb{E}_{[X]}^2(N_2,d_1)}{\mathbb{E}_{[X^2]}(N_2,d_1)-\mathbb{E}_{[X]}^2(N_2,d_1)}$, and $\theta_2=\frac{\mathbb{E}_{[X^2]}(N_2,d_1)-\mathbb{E}_{[X]}^2(N_2,d_1)}{\mathbb{E}_{[X]}(N_2,d_1)}$.
\end{proposition}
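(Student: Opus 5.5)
We follow the template of the previous proofs. We start from the definition of the joint energy–data rate outage probability for the UE-side deployment under ES. Outage occurs if either the harvested energy from the $N_1$ harvesting elements falls short of $T(N_2P_{\mathrm{elem}}+P_{\mathrm{ctrl}})$, or the rate delivered by the $N_2$ reflecting elements falls below $R_{\mathrm{thr}}$. That is,
\begin{equation*}
\small
P_{\mathrm{U}}^{\mathrm{ES}}=\Pr\!\Bigg(T\zeta P_tG_t\ell_1\Big|\sum_{i=1}^{N_1}|h_{1i}|e^{j(\omega_i^{\mathrm{h}}+\arg(h_{1i}))}\Big|^2\leq T(N_2P_{\mathrm{elem}}+P_{\mathrm{ctrl}})\ \cup\ \log_2\!\Big(1+\gamma_tG\ell_p\Big|\sum_{i=1}^{N_2}|h_{1i}|e^{j(\omega_i^{\mathrm{c}}+\arg(h_{1i})+\arg(h_{2i}))}\Big|^2\Big)\leq R_{\mathrm{thr}}\Bigg).
\end{equation*}

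Next we reduce both sums to the residual-phase form. Since the zeRIS-UE link is LoS, $|h_{2i}|=1$ and $\arg(h_{2i})=\frac{2\pi d_2}{\lambda}$ is common to all elements. It therefore contributes only a deterministic common rotation that leaves the modulus unchanged. Strictly, the quantized cascaded phase differs from the quantized Tx-zeRIS phase only through this common offset. As in the UE-side TS proof, we argue that each term can be written as $|h_{1i}|e^{j\varepsilon_i}$, with $\varepsilon_i$ following Proposition~1 with $d_u=d_1$. This gives the two RVs
\[
X_1=\Big|\sum_{i=1}^{N_1}|h_{1i}|e^{j\varepsilon_i}\Big|^2,\qquad X_2=\Big|\sum_{i=N_1+1}^{N}|h_{1i}|e^{j\varepsilon_i}\Big|^2 .
\]
The energy event becomes $X_1\le x^U_{\mathrm{ES}}$ and the rate event becomes $X_2\le x^U_{\mathrm{R}}$.

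The key structural point, in contrast with the TS case, is that $X_1$ and $X_2$ are built on disjoint sets of elements. Assuming independent fading across elements, they are independent. Hence $\Pr(A\cup B)=\Pr(A)+\Pr(B)-\Pr(A)\Pr(B)$, which produces exactly the three-term structure of \eqref{PUES}.

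Finally, each marginal CDF is approximated by moment matching to a Gamma distribution, as in Proposition~\ref{prop:4}. Since $X_1$ and $X_2$ have the same form as the $X$ of Appendix~A with $N$ replaced by $N_1$ and $N_2$ respectively and $d_u=d_1$, their first two moments are \eqref{eq:EX_TS} and \eqref{eq:EX2_TS} evaluated at $(N_1,d_1)$ and $(N_2,d_1)$. This gives $k_1,\theta_1,k_2,\theta_2$, and the regularized lower incomplete Gamma functions give the two marginal probabilities.

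The main obstacle is the justification of independence together with the common-phase reduction. One must check that the quantization of $-\arg(h_{1i})-\frac{2\pi d_2}{\lambda}$ yields a residual error whose circular moments coincide in modulus with those used for $\omega_i^{\mathrm{h}}$. Only $|\mu_1|$, $|\mu_2|$ and $\mathrm{Re}\{\mu_2(\mu_1^*)^2\}$ enter the moments, and these are invariant to a common phase shift of the error distribution. We expect this to hold at least approximately, which is consistent with the claim that the result is an accurate approximation.
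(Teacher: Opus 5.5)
Your proposal follows the paper's proof step for step. It writes the outage as the union of the energy and rate events, reduces both sums to the $|h_{1i}|e^{j\varepsilon_i}$ form via the LoS zeRIS--UE link, uses independence of the disjoint $N_1$- and $N_2$-element subsets to get the inclusion--exclusion form, and applies Gamma moment matching with the moments at $(N_1,d_1)$ and $(N_2,d_1)$.

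The one point you flag is where the paper simply asserts that the harvesting and reflection residual errors have identical statistics. Your invariance argument does not fully close it, because quantizing $-\arg(h_{1i})-\frac{2\pi d_2}{\lambda}$ is not a pure rotation of $\varepsilon_i$: the wrap into $[-\Delta/2,\Delta/2]$ can change $|\mu_1|$ unless the offset is a multiple of $\Delta$. So this step is an unproved approximation in both your argument and the paper's.
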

\begin{IEEEproof}
Considering the definition of the joint energy-data rate outage probability, it follows that for the UE-side deployment and the ES scheme it can be expressed as
\begin{equation}\label{PGEN_ES_UE}
\small
\begin{split}
&P_{\mathrm{U}}^{\mathrm{ES}}
= \Pr \Bigg(
T \zeta P_t G_t \ell_1
\left|\sum_{i=1}^{N_1}|h_{1i}|e^{j\left(\omega_i^{\mathrm{h}}+\arg(h_{1i})\right)}\right|^2
\\
&\leq T\left(
N_2P_{\mathrm{elem}}+P_{\mathrm{ctrl}}
\right)
\cup\,
\log_2\!\Bigg(
1+\gamma_t G \ell_p
\left|\sum_{i=1}^{N_2}|h_{1i}|
\right.
\\
&\left.
\qquad \times e^{j\left(\omega_i^{\mathrm{c}}+\arg(h_{1i})+\arg(h_{2i})\right)}\right|^2
\Bigg)
\leq R_{\mathrm{thr}}
\Bigg).
\end{split}
\end{equation}
Since the zeRIS-UE link is purely LoS, \(\arg(h_{2i})=\frac{2\pi d_2}{\lambda}\) is deterministic for all elements, and thus the corresponding residual phase errors in the harvesting and reflection terms have identical statistics. Therefore, by denoting these residual phase errors by \(\varepsilon_i\), \eqref{PGEN_ES_UE} can be rewritten as
\begin{equation}
\small
\begin{split}
P_{\mathrm{U}}^{\mathrm{ES}}
=\Pr \Bigg(
&\left|\sum_{i=1}^{N_1}|h_{1i}|e^{j\varepsilon_i}\right|^2
\!\!\!\leq \!x^U_{\mathrm{ES}}\cup
\left|\sum_{i=1}^{N_2}|h_{1i}|e^{j\varepsilon_i}\right|^2
\leq x^U_{\mathrm{R}}
\Bigg).
\end{split}
\end{equation}
Since the two events correspond to different element subsets of sizes \(N_1\) and \(N_2\), they are independent, and thus the outage probability becomes
\begin{equation}
\small
\begin{split}
P_{\mathrm{U}}^{\mathrm{ES}}
& \!=\!
\Pr\!\left(
\left|\sum_{i=1}^{N_1}|h_{1i}|e^{j\varepsilon_i}\right|^2
\! \leq \! x^U_{\mathrm{ES}}
\right)
+
\Pr\!\left(
\left|\sum_{i=1}^{N_2}|h_{1i}|e^{j\varepsilon_i}\right|^2
\!\leq \! x^U_{\mathrm{R}}
\right)
\\
& -
\Pr\!\left(
\left|\sum_{i=1}^{N_1}|h_{1i}|e^{j\varepsilon_i}\right|^2
\! \leq \! x^U_{\mathrm{ES}}
\right)
\Pr\!\left(
\left|\sum_{i=1}^{N_2}|h_{1i}|e^{j\varepsilon_i}\right|^2
\! \leq \! x^U_{\mathrm{R}}
\right) \! .
\end{split}
\end{equation}
Finally, by approximating the two RVs through Gamma distributions using moment matching with the moments \(\mathbb{E}_{[X]}(N_1,d_1)\), \(\mathbb{E}_{[X^2]}(N_1,d_1)\), \(\mathbb{E}_{[X]}(N_2,d_1)\), and \(\mathbb{E}_{[X^2]}(N_2,d_1)\), we obtain \eqref{PUES}, which concludes the proof.
\end{IEEEproof}

\begin{table}[!ht]
\footnotesize
\renewcommand{\arraystretch}{0.95}
\caption{\textsc{Simulation Parameters}}
\label{values_sim}
\centering
\begin{tabular}{lll}
\hline
\bfseries Parameter & \bfseries Notation & \bfseries Value \\
\hline\hline
Carrier frequency & $f$ & $900$ MHz \\
Block duration & $T$ & $1$ s \\
Path loss constant & $C_0$ & $\lambda^2/(16\pi^2)$ \\
Energy efficiency & $\zeta$ & $0.65$ \\
Element power & $P_{\mathrm{elem}}$ & $q \times 0.06$ mW \\
Controller power & $P_{\mathrm{circ}}$ & $50$ mW \\
Noise variance & $\sigma^2$ & $-100$ dB \\
Tx antenna gain & $G_t$ & $4$ dB \\
UE antenna gain & $G_r$ & $0$ dB \\
Rate threshold & $R_{\mathrm{thr}}$ & $\log_2(11)$ \\
Nakagami shape & $m_n$ & $3$ \\
Nakagami spread & $\Omega$ & $1$ \\
Concentration parameter & $\kappa$ & $3$ \\
\hline
\end{tabular}
\end{table}

\section{Numerical Results}

In this section, we evaluate the performance of a downlink zeRIS-assisted communication scenario in terms of joint energy-data rate outage probability, minimum number of reflecting elements, and energy efficiency. Specifically, both the Tx-side and UE-side zeRIS cases are investigated, while the impact of the considered HaR schemes and phase quantization is examined. To derive the numerical results, we set the parameters of the analyzed system model as shown in Table \ref{values_sim}, where it should be noted that $\kappa$ is set to be 3 in order to describe a wireless propagation environment with non-isotropically distributed scatterers that impact the phase of the received signal \cite{Abdi}. Furthermore, unless otherwise stated, the distances are set to $d_1=15$ m and $d_2=45$ m for the Tx-side deployment, and $d_1=45$ m and $d_2=15$ m for the UE-side deployment, where the link in close proximity to the zeRIS is modeled as a LoS link with path loss exponent equal to $2$, whereas the other link is subject to Nakagami-$m$ fading with path loss exponent equal to $2.2$. Additionally, for the numerical evaluation of the circular moments, the infinite series in \eqref{eq:nth_moment} is truncated to $L=10$, which, according to Proposition~3, guarantees a truncation error below $10^{-6}$ for the considered values of $\kappa$, thereby ensuring negligible impact on the reported results. Finally, we employ Monte Carlo simulations with $10^7$ realizations to verify the accuracy of the derived analytical results, where the simulation results are illustrated as marks, whereas the analytical results corresponding to the proposed model and to the benchmark in \cite{badiu2020communication} are illustrated as solid and dashed lines, respectively.

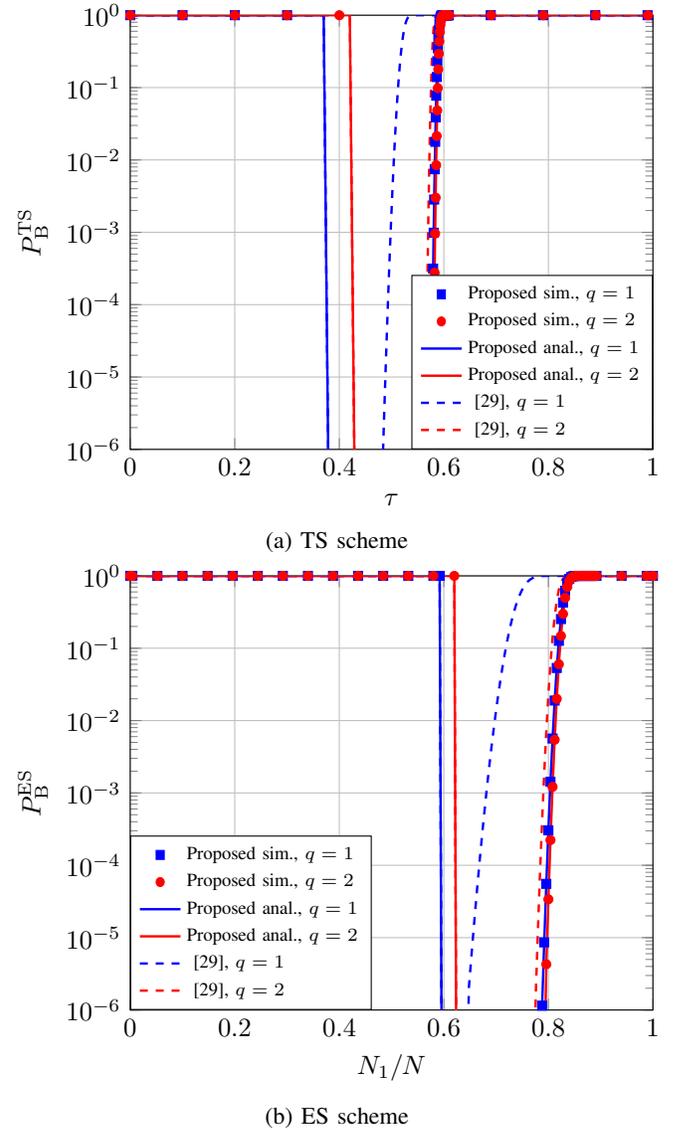
\begin{figure}
    \centering

    \begin{subfigure}[t]{.48\textwidth}
        \centering
        \begin{tikzpicture}
        \begin{semilogyaxis}[
            width=0.98\linewidth,
            xlabel = {$\tau$},
            ylabel = {${P}_{\mathrm{B}}^{\mathrm{TS}}$},
            xmin = 0, xmax = 1,
            ymin = 1e-6, ymax = 1,
            grid = major,
            legend image post style={xscale=0.9},
            legend cell align = {left},
            legend style={at={(1,0)},anchor=south east, font=\scriptsize}
        ]

        \addplot[
            blue,
            only marks,
            mark=square*,
            mark options={solid},
            mark repeat = 10,
            mark size = 1.7,
        ]
        table {results/BSside_TS_q1_proposed_theory.dat};
        \addlegendentry{Proposed sim., $q=1$}

        \addplot[
            red,
            only marks,
            mark=*,
            mark options={solid},
            mark repeat = 10,
            mark size = 1.7,
        ]
        table {results/BSside_TS_q2_proposed_theory.dat};
        \addlegendentry{Proposed sim., $q=2$}

        \addplot[
            blue,
            no marks,
            line width = 0.9pt,
            solid,
        ]
        table {results/BSside_TS_q1_proposed_theory.dat};
        \addlegendentry{Proposed anal., $q=1$}

        \addplot[
            red,
            no marks,
            line width = 0.9pt,
            solid,
        ]
        table {results/BSside_TS_q2_proposed_theory.dat};
        \addlegendentry{Proposed anal., $q=2$}

        \addplot[
            blue,
            no marks,
            line width = 0.9pt,
            dashed,
        ]
        table {results/BSside_TS_q1_badiu_theory.dat};
        \addlegendentry{\cite{badiu2020communication}, $q=1$}

        \addplot[
            red,
            no marks,
            line width = 0.9pt,
            dashed,
        ]
        table {results/BSside_TS_q2_badiu_theory.dat};
        \addlegendentry{\cite{badiu2020communication}, $q=2$}

        \end{semilogyaxis}
        \end{tikzpicture}
        \caption{TS scheme}
        \label{fig:BSside_outage_TS}
    \end{subfigure}
    \hfill
\begin{subfigure}[t]{.48\textwidth}
    \centering
    \begin{tikzpicture}
    \begin{semilogyaxis}[
        width=0.98\linewidth,
        xlabel = {$N_1/N$},
        ylabel = {${P}_{\mathrm{B}}^{\mathrm{ES}}$},
        xmin = 0, xmax = 1,
        ymin = 1e-6, ymax = 1,
        grid = major,
        legend image post style={xscale=0.9},
        legend cell align = {left},
        legend style={at={(0,0)},anchor=south west, font=\scriptsize}
    ]

    \addplot[
        blue,
        only marks,
        mark=square*,
        mark options={solid},
        mark repeat = 1,
        mark size = 1.7,
    ]
    table {results/BSside_ES_q1_proposed_theory.dat};
    \addlegendentry{Proposed sim., $q=1$}

    \addplot[
        red,
        only marks,
        mark=*,
        mark options={solid},
        mark repeat = 1,
        mark size = 1.7,
    ]
    table {results/BSside_ES_q2_proposed_theory.dat};
    \addlegendentry{Proposed sim., $q=2$}

    \addplot[
        blue,
        no marks,
        line width = 0.9pt,
        solid,
    ]
    table {results/BSside_ES_q1_proposed_theory.dat};
    \addlegendentry{Proposed anal., $q=1$}

    \addplot[
        red,
        no marks,
        line width = 0.9pt,
        solid,
    ]
    table {results/BSside_ES_q2_proposed_theory.dat};
    \addlegendentry{Proposed anal., $q=2$}

    \addplot[
        blue,
        no marks,
        line width = 0.9pt,
        dashed,
    ]
    table {results/BSside_ES_q1_badiu_theory.dat};
    \addlegendentry{\cite{badiu2020communication}, $q=1$}

    \addplot[
        red,
        no marks,
        line width = 0.9pt,
        dashed,
    ]
    table {results/BSside_ES_q2_badiu_theory.dat};
    \addlegendentry{\cite{badiu2020communication}, $q=2$}

    \end{semilogyaxis}
    \end{tikzpicture}
    \caption{ES scheme}
    \label{fig:BSside_outage_ES}
\end{subfigure}

    \caption{Joint energy-data rate outage probability for the Tx-side zeRIS under the (a) TS scheme, and (b) ES scheme for $N=250$.}
    \label{fig:BSside_outage}
\end{figure}

Fig. \ref{fig:BSside_outage} illustrates the joint energy-data rate outage probability of a Tx-side zeRIS-assisted network under the TS and ES schemes for $N=250$, where Fig. \ref{fig:BSside_outage}a corresponds to the TS scheme and Fig. \ref{fig:BSside_outage}b corresponds to the ES scheme. The outage probability is depicted as a function of the time switching factor $\tau$ and the element-splitting ratio $N_1/N$, respectively, for different phase resolutions $q$. As can be observed, the analytical results corresponding to the proposed model closely match the simulation results, thereby validating the accuracy of the derived expressions. Furthermore, Fig. \ref{fig:BSside_outage} reveals that increasing the quantization resolution does not necessarily improve the performance, which is attributed to the combined effect of the increased power consumption of the zeRIS elements and the adopted residual phase error model. In particular, higher $q$ increases the power consumption of the reflecting elements, limiting the energy available for the communication phase, while, as captured by the residual phase error distribution in \eqref{error_distribution}, a portion of the impinging phases is concentrated around a deterministic LoS component, represented by the Dirac term weighted by $\frac{K}{K+1}$. In this case, the phases are inherently aligned and quantization effectively results in a phase shift without degrading coherence, thereby limiting the gains from increasing the phase resolution. In contrast, in the absence of a LoS component, increasing $q$ directly improves phase alignment, explaining why \cite{badiu2020communication} tends to overestimate the impact of quantization under the considered conditions. Specifically, as observed in both Fig. \ref{fig:BSside_outage}a and Fig. \ref{fig:BSside_outage}b, the results of \cite{badiu2020communication} exhibit a noticeable deviation from the proposed analysis and consistently operate as a loose lower bound of the achievable performance. Finally, it can be seen that as the quantization resolution increases, the gap between the benchmark and the proposed results decreases, since the range $[\frac{-\Delta}{2}, \frac{\Delta}{2}]$ of the residual phase error becomes smaller and the influence of the underlying distribution diminishes. Therefore, Fig. \ref{fig:BSside_outage} highlights the importance of jointly accounting for hardware consumption and residual phase error statistics in the design of zeRIS-assisted systems.

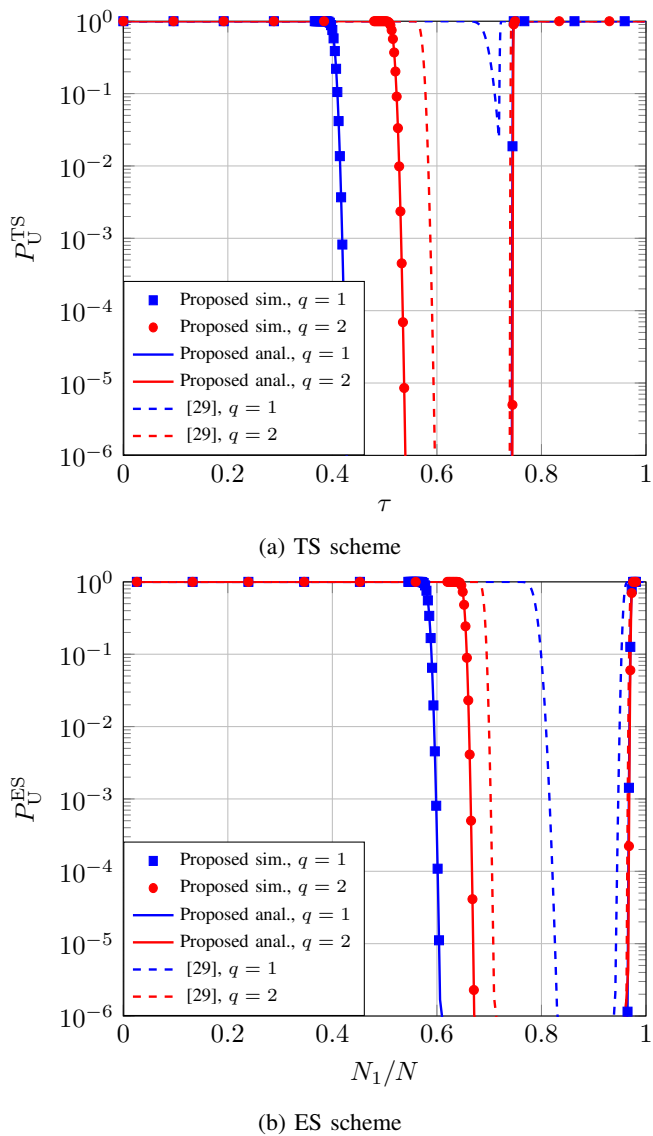
\begin{figure}
    \centering

    \begin{subfigure}[t]{.48\textwidth}
        \centering
        \begin{tikzpicture}
        \begin{semilogyaxis}[
            width=0.98\linewidth,
            xlabel = {$\tau$},
            ylabel = {${P}_{\mathrm{U}}^{\mathrm{TS}}$},
            xmin = 0, xmax = 1,
            ymin = 1e-6, ymax = 1,
            grid = major,
            legend image post style={xscale=0.9},
            legend cell align = {left},
            legend style={at={(0,0)},anchor=south west, font=\scriptsize}
        ]

        \addplot[
            blue,
            only marks,
            mark=square*,
            mark repeat = 24,
            mark size = 1.7,
        ]
        table {results/UE_TS_q1_proposed.dat};
        \addlegendentry{Proposed sim., $q=1$}

        \addplot[
            red,
            only marks,
            mark=*,
            mark repeat = 24,
            mark size = 1.7,
        ]
        table {results/UE_TS_q2_proposed.dat};
        \addlegendentry{Proposed sim., $q=2$}

        \addplot[
            blue,
            no marks,
            line width = 0.9pt,
            solid,
        ]
        table {results/UE_TS_q1_proposed.dat};
        \addlegendentry{Proposed anal., $q=1$}

        \addplot[
            red,
            no marks,
            line width = 0.9pt,
            solid,
        ]
        table {results/UE_TS_q2_proposed.dat};
        \addlegendentry{Proposed anal., $q=2$}

        \addplot[
            blue,
            no marks,
            dashed,
            line width = 0.9pt,
        ]
        table {results/UE_TS_q1_badiu.dat};
        \addlegendentry{\cite{badiu2020communication}, $q=1$}

        \addplot[
            red,
            no marks,
            dashed,
            line width = 0.9pt,
        ]
        table {results/UE_TS_q2_badiu.dat};
        \addlegendentry{\cite{badiu2020communication}, $q=2$}

        \end{semilogyaxis}
        \end{tikzpicture}
        \caption{TS scheme}
    \end{subfigure}
    \hfill
    \begin{subfigure}[t]{.48\textwidth}
        \centering
        \begin{tikzpicture}
        \begin{semilogyaxis}[
            width=0.98\linewidth,
            xlabel = {$N_1/N$},
            ylabel = {${P}_{\mathrm{U}}^{\mathrm{ES}}$},
            xmin = 0, xmax = 1,
            ymin = 1e-6, ymax = 1,
            grid = major,
            legend image post style={xscale=0.9},
            legend cell align = {left},
            legend style={at={(0,0)},anchor=south west, font=\scriptsize}
        ]

        \addplot[
            blue,
            only marks,
            mark=square*,
            mark repeat = 4,
            mark size = 1.7,
        ]
        table {results/UE_ES_q1_proposed_theory.dat};
        \addlegendentry{Proposed sim., $q=1$}

        \addplot[
            red,
            only marks,
            mark=*,
            mark repeat = 4,
            mark size = 1.7,
        ]
        table {results/UE_ES_q2_proposed_theory.dat};
        \addlegendentry{Proposed sim., $q=2$}

        \addplot[
            blue,
            no marks,
            line width = 0.9pt,
            solid,
        ]
        table {results/UE_ES_q1_proposed_theory.dat};
        \addlegendentry{Proposed anal., $q=1$}

        \addplot[
            red,
            no marks,
            line width = 0.9pt,
            solid,
        ]
        table {results/UE_ES_q2_proposed_theory.dat};
        \addlegendentry{Proposed anal., $q=2$}

        \addplot[
            blue,
            no marks,
            dashed,
            line width = 0.9pt,
        ]
        table {results/UE_ES_q1_badiu_theory.dat};
        \addlegendentry{\cite{badiu2020communication}, $q=1$}

        \addplot[
            red,
            no marks,
            dashed,
            line width = 0.9pt,
        ]
        table {results/UE_ES_q2_badiu_theory.dat};
        \addlegendentry{\cite{badiu2020communication}, $q=2$}

        \end{semilogyaxis}
        \end{tikzpicture}
        \caption{ES scheme}
    \end{subfigure}

    \caption{Joint energy-data rate outage probability for the UE-side zeRIS under the (a) TS scheme, and (b) ES scheme for $N=1500$.}
    \label{fig:UEside_outage}
\end{figure}

Fig. \ref{fig:UEside_outage} illustrates the joint energy-data rate outage probability of a UE-side zeRIS-assisted network under the TS and ES schemes for $N=1500$, where Fig. \ref{fig:UEside_outage}a corresponds to the TS scheme and Fig. \ref{fig:UEside_outage}b corresponds to the ES scheme. As can be observed, the analytical results corresponding to the proposed model closely match the simulation results, thereby validating the accuracy of the derived expressions. Moreover, compared to the Tx-side case, a significantly larger number of reflecting elements is required to achieve reliable performance, which is attributed to the fact that, in the UE-side deployment, the link responsible for energy harvesting is subject to both small-scale fading and increased path loss, thus reducing the harvested energy and necessitating larger $N$ values. Furthermore, the importance of accurately modeling the residual phase error becomes even more pronounced in this regime, as, for the TS scheme with $q=1$, the benchmark in \cite{badiu2020communication} suggests that the zeRIS-assisted system cannot achieve reliable communication, with the outage probability remaining above $10^{-2}$ for all $\tau$, whereas the proposed model reveals that reliable operation is in fact achievable over a wide range of $\tau$. This discrepancy indicates that neglecting the correct residual phase error statistics may lead to overly pessimistic conclusions and, consequently, to suboptimal system design choices. Finally, similar to the Tx-side case, it can be observed that as the quantization resolution increases, the gap between the benchmark and the proposed results decreases, since the range $[-\Delta/2, \Delta/2]$ of the residual phase error becomes smaller and the influence of the underlying distribution diminishes. Therefore, Fig. \ref{fig:UEside_outage} further emphasizes that an accurate characterization of phase distortions is essential for the reliable and efficient design of zeRIS-assisted systems, particularly in challenging deployment scenarios such as the UE-side case.

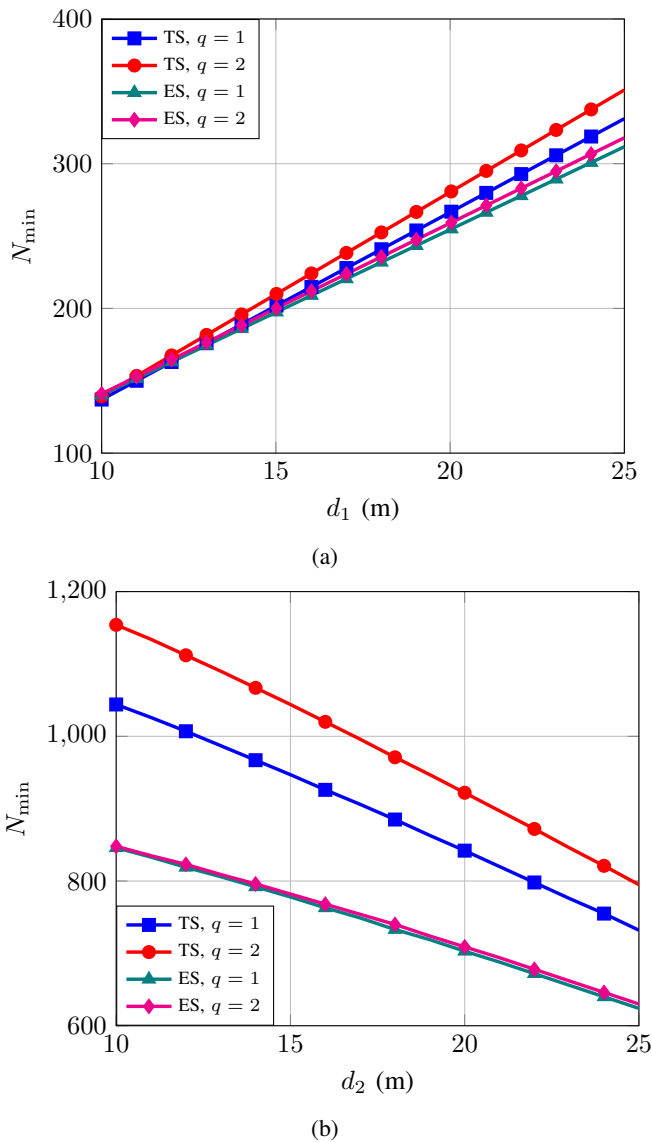
\begin{figure}
    \centering
    \begin{subfigure}[t]{.48\textwidth}
        \centering
        \begin{tikzpicture}
	\begin{axis}[
	width=0.98\linewidth,
	xlabel = {$d_1$ (m)},
	ylabel = {$N_{\mathrm{min}}$},
	xmin = 10,xmax = 25,
	ymin = 100,
	ymax = 400,
	xtick = {10,15,...,25},
	grid = major,
	legend cell align = {left},
        legend style={at={(0,1)},anchor=north west, font=\scriptsize}
	]

	\addplot[
	blue,
        mark=square*,
	mark repeat = 20,
	mark size = 2,
	line width = 1.3pt,
	]
	table {results/BS_TS_Nmin_q1_proposed.dat};
 	\addlegendentry{TS, $q=1$}

	\addplot[
	red,
        mark=*,
	mark repeat = 20,
	mark size = 2,
	line width = 1.3pt,
	]
	table {results/BS_TS_Nmin_q2_proposed.dat};	
 	\addlegendentry{TS, $q=2$}

	\addplot[
	teal,
        mark=triangle*,
	mark repeat = 20,
	mark size = 2,
	line width = 1.3pt,
	]
	table {results/BS_ES_Nmin_q1_proposed.dat};	
 	\addlegendentry{ES, $q=1$}

	\addplot[
	magenta,
        mark=diamond*,
	mark repeat = 20,
	mark size = 2,
	line width = 1.3pt,
	]
	table {results/BS_ES_Nmin_q2_proposed.dat};	
 	\addlegendentry{ES, $q=2$}

	\end{axis}
        \end{tikzpicture}
        \caption{}
        \label{fig:BSNmin}
    \end{subfigure}
    \hfill
    \begin{subfigure}[t]{.48\textwidth}
        \centering
        \begin{tikzpicture}
 	\begin{axis}[
	width=0.98\linewidth,
	xlabel = {$d_2$ (m)},
	ylabel = {$N_{\mathrm{min}}$},
	xmin = 10,xmax = 25,
	ymin = 600,
	ymax = 1200,
	xtick = {10,15,...,25},
	grid = major,
	legend cell align = {left},
        legend style={at={(0,0)},anchor=south west, font=\scriptsize}
	]

	\addplot[
	blue,
        mark=square*,
	mark repeat = 2,
	mark size = 2,
	line width = 1.3pt,
	]
	table {results/UE_TS_Nmin_q1_proposed.dat};
 	\addlegendentry{TS, $q=1$}

	\addplot[
	red,
        mark=*,
	mark repeat = 2,
	mark size = 2,
	line width = 1.3pt,
	]
	table {results/UE_TS_Nmin_q2_proposed.dat};	
 	\addlegendentry{TS, $q=2$}

	\addplot[
	teal,
        mark=triangle*,
	mark repeat = 2,
	mark size = 2,
	line width = 1.3pt,
	]
	table {results/UE_ES_Nmin_q1_proposed.dat};	
 	\addlegendentry{ES, $q=1$}

	\addplot[
	magenta,
        mark=diamond*,
	mark repeat = 2,
	mark size = 2,
	line width = 1.3pt,
	]
	table {results/UE_ES_Nmin_q2_proposed.dat};	
 	\addlegendentry{ES, $q=2$}

	\end{axis}
        \end{tikzpicture}
        \caption{}
        \label{fig:USNmin}
    \end{subfigure}

    \caption{Analysis of minimum reflecting elements for: (a) Tx-side zeRIS with $d_2=60-d_1$, (b) UE-side zeRIS with $d_1=60-d_2$.}
    \label{fig:Nminanalysis}
\end{figure}

Fig. \ref{fig:Nminanalysis} illustrates the minimum required number of reflecting elements, $N_{\mathrm{min}}$, to achieve a joint energy-data rate outage probability lower than or equal to $10^{-6}$ for both Tx-side and UE-side zeRIS deployment scenarios, where Fig. \ref{fig:Nminanalysis}a corresponds to the Tx-side case with $d_2 = 60 - d_1$, and Fig. \ref{fig:Nminanalysis}b corresponds to the UE-side case with $d_1 = 60 - d_2$. In particular, for each configuration, $N_{\mathrm{min}}$ is obtained by evaluating the outage probability at the optimal operating point of each HaR scheme and identifying the minimum number of reflecting elements that satisfies the outage constraint. As can be observed, for the examined propagation scenarios, the ES scheme requires a smaller number of reflecting elements compared to the TS scheme, highlighting its higher efficiency in utilizing the available harvested energy. Moreover, in both deployment scenarios, the case of $q=1$ results in a lower $N_{\mathrm{min}}$ compared to $q=2$, which is attributed to the increased element power consumption at higher $q$ values and the underlying residual phase error characteristics, which limit the performance gains from larger $q$ under the considered propagation conditions. In addition, Fig. \ref{fig:Nminanalysis}a reveals that $N_{\mathrm{min}}$ increases with $d_1$, indicating that as the Tx is farther from the zeRIS, the reduction in harvested energy requires a larger number of reflecting elements. On the other hand, Fig. \ref{fig:Nminanalysis}b shows a decreasing trend of $N_{\mathrm{min}}$ with $d_2$, since reducing the zeRIS–UE distance improves the effective communication link, thereby relaxing the system requirements. Finally, a clear performance gap between the Tx-side and UE-side deployments is observed, with the latter requiring significantly more reflecting elements, which is consistent with the more challenging energy harvesting conditions in the UE-side case.

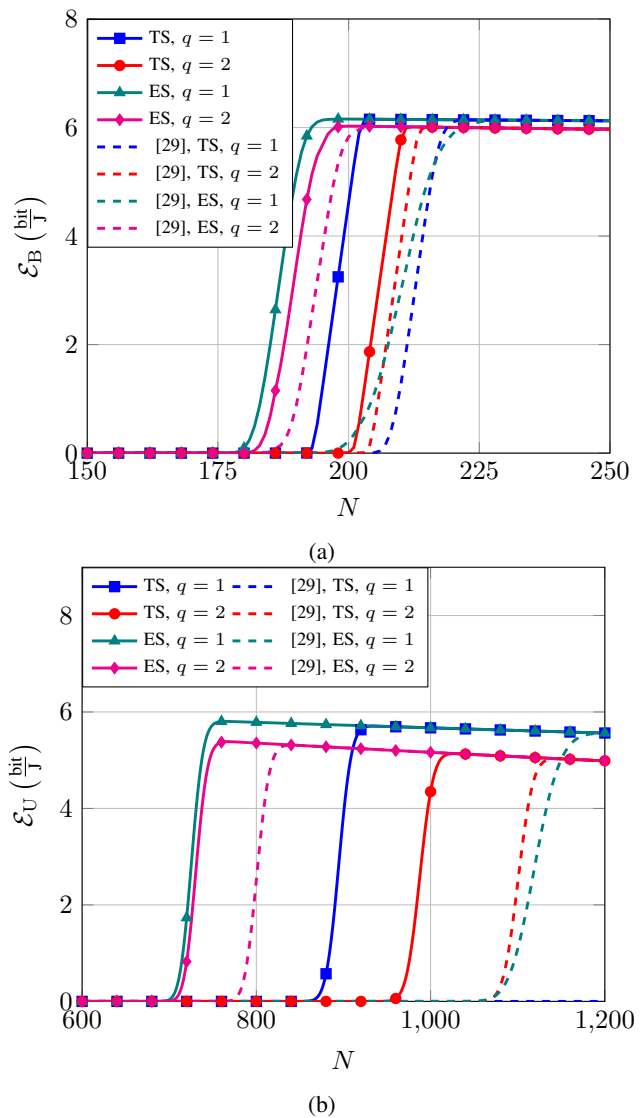
\begin{figure}
    \centering
    \begin{subfigure}[t]{.48\textwidth}
        \centering
        \begin{tikzpicture}
        \begin{axis}[
        width=0.98\linewidth,
        xlabel = {$N$},
        ylabel = {$\mathcal{E}_{\mathrm{B}} \left(\frac{\text{bit}}{\text{J}}\right)$},
        xmin = 150, xmax = 250,
        ymin = 0, ymax = 8,
        xtick = {150,175,200,225,250},
        grid = major,
        legend cell align = {left},
        legend style={at={(0,1)},anchor=north west, font=\scriptsize}
        ]

        \addplot[
        blue, mark=square*, mark repeat=6, mark size=1.6,
        line width=1.1pt
        ] table {results/EE_BS_TS_q1_proposed_smooth.dat};
        \addlegendentry{TS, $q=1$}

        \addplot[
        red, mark=*, mark repeat=6, mark size=1.6,
        line width=1.1pt
        ] table {results/EE_BS_TS_q2_proposed_smooth.dat};
        \addlegendentry{TS, $q=2$}

        \addplot[
        teal, mark=triangle*, mark repeat=6, mark size=1.6,
        line width=1.1pt
        ] table {results/EE_BS_ES_q1_proposed_smooth.dat};
        \addlegendentry{ES, $q=1$}

        \addplot[
        magenta, mark=diamond*, mark repeat=6, mark size=1.6,
        line width=1.1pt
        ] table {results/EE_BS_ES_q2_proposed_smooth.dat};
        \addlegendentry{ES, $q=2$}

        \addplot[blue, dashed, line width=1.1pt]
        table {results/EE_BS_TS_q1_badiu_smooth.dat};
        \addlegendentry{\cite{badiu2020communication}, TS, $q=1$}

        \addplot[red, dashed, line width=1.1pt]
        table {results/EE_BS_TS_q2_badiu_smooth.dat};
        \addlegendentry{\cite{badiu2020communication}, TS, $q=2$}

        \addplot[teal, dashed, line width=1.1pt]
        table {results/EE_BS_ES_q1_badiu_smooth.dat};
        \addlegendentry{\cite{badiu2020communication}, ES, $q=1$}

        \addplot[magenta, dashed, line width=1.1pt]
        table {results/EE_BS_ES_q2_badiu_smooth.dat};
        \addlegendentry{\cite{badiu2020communication}, ES, $q=2$}

        \end{axis}
        \end{tikzpicture}
        \caption{}
        \label{fig:EEBS}
    \end{subfigure}
    \hfill
    \begin{subfigure}[t]{.48\textwidth}
        \centering
        \begin{tikzpicture}
        \begin{axis}[
        width=0.98\linewidth,
        xlabel = {$N$},
        ylabel = {$\mathcal{E}_{\mathrm{U}} \left(\frac{\text{bit}}{\text{J}}\right)$},
        xmin = 600, xmax = 1200,
        ymin = 0, ymax = 9,
        xtick = {600,800,1000,1200},
        grid = major,
        legend columns=4,
        transpose legend,
        legend cell align = {left},
        legend style={at={(0,1)},anchor=north west, font=\scriptsize}
        ]

        \addplot[
        blue, mark=square*, mark repeat=40, mark size=1.6,
        line width=1.1pt
        ] table {results/EE_UE_TS_q1_proposed_smooth.dat};
        \addlegendentry{TS, $q=1$}

        \addplot[
        red, mark=*, mark repeat=40, mark size=1.6,
        line width=1.1pt
        ] table {results/EE_UE_TS_q2_proposed_smooth.dat};
        \addlegendentry{TS, $q=2$}

        \addplot[
        teal, mark=triangle*, mark repeat=40, mark size=1.6,
        line width=1.1pt
        ] table {results/EE_UE_ES_q1_proposed_smooth.dat};
        \addlegendentry{ES, $q=1$}

        \addplot[
        magenta, mark=diamond*, mark repeat=40, mark size=1.6,
        line width=1.1pt
        ] table {results/EE_UE_ES_q2_proposed_smooth.dat};
        \addlegendentry{ES, $q=2$}

        \addplot[blue, dashed, line width=1.1pt]
        table {results/EE_UE_TS_q1_badiu_smooth.dat};
        \addlegendentry{\cite{badiu2020communication}, TS, $q=1$}

        \addplot[red, dashed, line width=1.1pt]
        table {results/EE_UE_TS_q2_badiu_smooth.dat};
        \addlegendentry{\cite{badiu2020communication}, TS, $q=2$}

        \addplot[teal, dashed, line width=1.1pt]
        table {results/EE_UE_ES_q1_badiu_smooth.dat};
        \addlegendentry{\cite{badiu2020communication}, ES, $q=1$}

        \addplot[magenta, dashed, line width=1.1pt]
        table {results/EE_UE_ES_q2_badiu_smooth.dat};
        \addlegendentry{\cite{badiu2020communication}, ES, $q=2$}

        \end{axis}
        \end{tikzpicture}
        \caption{}
        \label{fig:EEUS}
    \end{subfigure}

    \caption{Energy efficiency versus the number of reflecting elements for: (a) Tx-side zeRIS, (b) UE-side zeRIS.}
    \label{fig:EEanalysis}
\end{figure}

Finally, Fig.~\ref{fig:EEanalysis} illustrates the energy efficiency of a zeRIS-assisted network as a function of the number of reflecting elements $N$, for both Tx-side and UE-side deployment scenarios, considering the TS and ES HaR schemes under quantized phase shifts with resolutions $q=1$ and $q=2$, while also comparing the proposed analytical framework with the benchmark results of~\cite{badiu2020communication}. It can be observed that, across all examined cases, the ES scheme consistently achieves higher energy efficiency than the TS scheme, indicating its superior capability in balancing energy harvesting and communication processes. At the same time, the case of $q=1$ emerges as the most energy-efficient configuration compared to $q=2$, despite the improved coherent combining offered by higher phase resolution, which reveals a non-trivial trade-off where the additional power consumption required for finer quantization outweighs the corresponding communication gains. This behavior is consistently observed in both deployment scenarios, although it becomes more pronounced in the UE-side case due to the increased reliance on the energy harvesting link. Notably, discrepancies between the proposed model and the benchmark results are clearly evident, as in the Tx-side deployment the use of~\cite{badiu2020communication} would lead to the conclusion that TS is more energy efficient than ES for $q=1$, while in the UE-side deployment the benchmark model suggests that the TS scheme with $q=1$ yields zero energy efficiency over the entire examined range of $N$, whereas the proposed model indicates that the optimal operation is attained at $N=934$ elements. These observations underline that phase quantization in zeRIS-assisted systems jointly affects energy harvesting and communication in a coupled manner, leading to non-intuitive design regimes, while also demonstrating that inaccurate modeling of the residual phase error may result in misleading conclusions regarding both the optimal HaR strategy and the required number of reflecting elements for energy-efficient operation.

\section{Conclusion} 
In this work, we developed a comprehensive analytical framework to characterize the performance of zeRIS-assisted communication systems under quantized phase control and energy-constrained operation. By explicitly modeling the residual phase error induced by quantization and incorporating its statistical properties into the analysis, we derived tractable expressions for the joint energy–data rate outage probability under both TS and ES schemes and for different deployment scenarios. The obtained results revealed that quantization affects zeRIS operation in a coupled manner, simultaneously influencing energy harvesting and signal reflection and inducing non-trivial trade-offs, while also showing that increasing the phase resolution does not necessarily improve performance. Furthermore, the analysis demonstrated that simplified or inaccurate modeling of residual phase errors may lead to misleading conclusions regarding both the achievable performance and the optimal system design, particularly in regimes where energy constraints dominate. Therefore, the presented results provide clear design guidelines for zeRIS-assisted systems, emphasizing the need to jointly account for quantization and propagation conditions when selecting the phase resolution and the applied HaR scheme.

\appendices
\section{Proof of Proposition \ref{prop:4}}
Considering the definition of the joint energy-data rate outage probability, it follows that for the Tx-side deployment and the TS scheme it can be expressed as
\begin{equation}\label{PGEN_TS_BS}
\small
\begin{split}
P_{\mathrm{B}}^{\mathrm{TS}} 
= &\Pr \Bigg( \tau T \zeta P_t G_t \ell_1 
\left|\sum_{i=1}^{N} e^{j\left(\omega_i^{\mathrm{h}}+\arg(h_{1i})\right)}\right|^2 \\
&\leq T\big((1-\tau)NP_{\mathrm{elem}}+P_{\mathrm{ctrl}}\big) \\
&\cup \ (1-\tau)\log_2\!\left(1+\gamma_t G \ell_1 \ell_2 
\left|\sum_{i=1}^{N}|h_{2i}|e^{j\varepsilon_i}\right|^2\right) \leq R_{\mathrm{thr}} \Bigg).
\end{split}
\end{equation}
During the energy harvesting phase, the zeRIS is configured to align the impinging signals by selecting the phase shifts as \(\omega_i = \mathbb{Q}(-\arg(h_{1i}))\). Since the Tx-zeRIS link is purely LoS, it holds that \(\arg(h_{1i}) = \frac{2\pi d_1}{\lambda}\) for all the zeRIS elements. As a result, even in the presence of phase quantization, all elements apply the same quantized phase shift, which leads to a common residual phase. Therefore, the reflected signals remain fully coherent during the energy harvesting phase. Consequently, $\left|\sum_{i=1}^{N} e^{j\left(\omega_i^{\mathrm{h}}+\arg(h_{1i})\right)}\right|^2$ reduces to  $N^2$, and thus the harvested energy becomes independent of the residual phase error. Therefore, the energy outage event can be equivalently written as \(P_{\mathrm{B}}^{\mathrm{TS}}=1\) when \(N \leq N_{\min}^{\mathrm{TS}}\).

For the case where \(N > N_{\min}^{\mathrm{TS}}\), the outage event is determined only by the rate condition. In particular, by defining
\begin{equation}
\small
X=\left|\sum_{i=1}^{N}|h_{2i}|e^{j\varepsilon_i}\right|^2,
\end{equation}
the rate outage event can be rewritten as $X \leq x^B_{\mathrm{TS}}$, which yields
\begin{equation}
\small
P_{\mathrm{B}}^{\mathrm{TS}} = \Pr\!\left(X \leq x^B_{\mathrm{TS}}\right).
\end{equation}
To proceed, we approximate the RV \(X\) through moment matching by a Gamma distributed RV with parameters
\begin{equation}
\small
k_{\mathrm{TS}} = \frac{\mathbb{E}^2[X;N,d_2]}{\mathbb{E}[X^2;N,d_2]-\mathbb{E}^2[X;N,d_2]},
\end{equation}
and
\begin{equation}
\small
\theta_{\mathrm{TS}} = \frac{\mathbb{E}[X^2;N,d_2]-\mathbb{E}^2[X;N,d_2]}{\mathbb{E}[X;N,d_2]}.
\end{equation}
To calculate \(\mathbb{E}[X;N,d_2]\) and \(\mathbb{E}[X^2;N,d_2]\), we set $X_i = |h_{2i}|e^{j\varepsilon_i}$, and $S=\sum_{i=1}^{N}X_i$, such that \(X=|S|^2=SS^{*}\) \cite{ThrassosHQAM}. After some algebraic manipulations, \(X\) and \(X^2\) can be expressed as
\begin{equation}\label{eq:X_expand_new}
\small
X=\sum_{i=1}^{N}\sum_{k=1}^{N}X_iX_k^{*},
\end{equation}
and
\begin{equation}\label{eq:X2_expand_new}
\small
X^2=\sum_{i=1}^{N}\sum_{k=1}^{N}\sum_{p=1}^{N}\sum_{q=1}^{N}X_iX_k^{*}X_pX_q^{*}.
\end{equation}
By observing \eqref{eq:X_expand_new} and \eqref{eq:X2_expand_new}, it becomes evident that \(\mathbb{E}[X;N,d_2]\) and \(\mathbb{E}[X^2;N,d_2]\) consist of \(N^2\) and \(N^4\) terms, respectively. Since expectation is a linear operator, both moments are obtained by summing the mean values of the corresponding terms after grouping together those with the same index pattern. In particular, for \(\mathbb{E}[X;N,d_2]\), the summation terms in \eqref{eq:X_expand_new} are the following:
\begin{itemize}
    \item \(N\) terms of the form \(|X_i|^2=|h_{2i}|^2\), when \(i=k\),
    \item \(N^2-N\) terms of the form \(X_iX_k^{*}\), when \(i\neq k\).
\end{itemize}
Therefore, \(\mathbb{E}[X;N,d_2]\) can be expressed as
\begin{equation} \label{final_ex}
\small
\begin{split}
\mathbb{E}[X;N,d_2]
&= N \mathbb{E}[|h_2|^2] + (N^2-N)\mathbb{E}[X_iX_k^{*}] \\
&= N \mathbb{E}[|h_2|^2] + N(N-1)\mathbb{E}[|h_2|]^2 |\mu_1(d_2)|^2,
\end{split}
\end{equation}
where the second equality follows from the independence among the fading amplitudes and the residual phase errors, together with \(\mu_1(d_2)=\mathbb{E}[e^{j\varepsilon}]\). Similarly, for \(\mathbb{E}[X^2]\), the expansion in \eqref{eq:X2_expand_new} leads to terms with different index multiplicities, namely, terms with one repeated index four times, terms with two distinct indices, terms with three distinct indices, and terms with four distinct indices. Since expectation is a linear operator, \(\mathbb{E}[X^2]\) is obtained by summing the mean value of each class after counting its multiplicity. More specifically, the non-zero contributions are the following:
\begin{itemize} 
\item \(N\) terms of the form \(X_iX_i^{*}X_iX_i^{*}=|X_i|^4\), yielding \(\mathbb{E}[|h_2|^4]\). 
\item \(4N(N-1)\) terms corresponding to the \(3+1\) index pattern, i.e., terms in which one index appears three times while another appears once, as exemplified by \(X_iX_i^{*}X_iX_k^{*}\), with \(i\neq k\). In this case, $X_iX_i^{*}X_iX_k^{*} = |h_{2i}|^2|h_{2i}||h_{2k}|e^{j(\varepsilon_i-\varepsilon_k)}$, and thus its expectation is given by
\[
\mathbb{E}[|h_2|^2]\mathbb{E}[|h_2|]^2|\mu_1(d_2)|^2.
\]
\item \(N(N-1)\) terms corresponding to the \(2+2\) pattern of the form \(X_i^2(X_k^{*})^2\), \(i\neq k\). In this case, \[ X_i^2(X_k^{*})^2 = |h_{2i}|^2|h_{2k}|^2 e^{j2(\varepsilon_i-\varepsilon_k)}, \] and therefore its expectation is \[ \mathbb{E}[|h_2|^2]^2|\mu_2(d_2)|^2. \] 
\item \(2N(N-1)\) terms corresponding to the \(2+2\) pattern of the form \(|X_i|^2|X_k|^2\), \(i\neq k\), yielding directly \[ \mathbb{E}[|h_2|^2]^2. \] 
\item \(N(N-1)(N-2)\) terms corresponding to the \(2+1+1\) pattern. A representative term is \(X_iX_i^{*}X_kX_p^{*}\), with \(i\), \(k\), and \(p\) pairwise distinct, which gives \[ |h_{2i}|^2|h_{2k}||h_{2p}|e^{j(\varepsilon_k-\varepsilon_p)}. \] In addition, terms such as \(X_iX_k^{*}X_iX_p^{*}\) and \(X_iX_k^{*}X_pX_i^{*}\) also arise under the same multiplicity pattern. By evaluating the expectation of all such terms and combining conjugate pairs, the total contribution of this class becomes \[ \mathbb{E}[|h_2|^2]\mathbb{E}[|h_2|]^2 \Big(2\mathrm{Re}\!\{\mu_2(d_2)(\mu_1^{*}(d_2))^2\}+4|\mu_1(d_2)|^2\Big). \]
\item \(N(N-1)(N-2)(N-3)\) terms corresponding to the all-distinct pattern, represented by \(X_iX_k^{*}X_pX_q^{*}\), where \(i\), \(k\), \(p\), and \(q\) are all different. By independence, the corresponding expectation factorizes as \[ \mathbb{E}[|h_2|]^4|\mu_1(d_2)|^4. \]
\end{itemize}
As a result, \(\mathbb{E}[X^2]\) is obtained as 
\begin{equation}\label{final_exx} 
\small
\begin{split} 
&\mathbb{E}[X^2] = N\mathbb{E}[|h_2|^4] + 4N(N-1)\mathbb{E}[|h_2|^2]\mathbb{E}[|h_2|]^2|\mu_1(d_2)|^2 \\
&\quad + N(N-1)\mathbb{E}[|h_2|^2]^2|\mu_2(d_2)|^2 + 2N(N-1)\mathbb{E}[|h_2|^2]^2 \\ 
&\quad + N(N-1)(N-2)\mathbb{E}[|h_2|^2]\mathbb{E}[|h_2|]^2 \\ 
&\qquad \times \Big(2\mathrm{Re}\!\{\mu_2(d_2)(\mu_1^{*}(d_2))^2\}+4|\mu_1(d_2)|^2\Big) \\ &\quad + N(N-1)(N-2)(N-3)\mathbb{E}[|h_2|]^4|\mu_1(d_2)|^4. 
\end{split} 
\end{equation} 
Finally, since \(|h_2|\) follows a Nakagami-\(m\) distribution, we can obtain the moments of \(|h_2|\) by \cite{ThrassosHQAM} and by substituting them into \eqref{final_ex} and \eqref{final_exx}, then \eqref{eq:EX_TS} and \eqref{eq:EX2_TS} can be acquired. Thus, by employing the CDF of the Gamma distribution, we obtain \eqref{PBTS}, which concludes the proof.

\bibliographystyle{IEEEtran}
\bibliography{reference}

\end{document}